\documentclass{egpubl}
\usepackage{pg2026}

\SpecialIssuePaper         % uncomment for final version of Computer Graphics Forum, special issue
\CGFStandardLicense
\usepackage[T1]{fontenc}
\usepackage{dfadobe}  

\usepackage{cite}  % comment out for biblatex with backend=biber
\BibtexOrBiblatex
\electronicVersion
\PrintedOrElectronic
\ifpdf \usepackage[pdftex]{graphicx} \pdfcompresslevel=9
\else \usepackage[dvips]{graphicx} \fi

\usepackage{egweblnk} 
\usepackage{lineno}
\usepackage{booktabs} % For formal tables
\usepackage{verbatim}
\usepackage{subfigure}
\usepackage{amsmath}
\usepackage{enumerate}
\usepackage{wrapfig}
\usepackage{color}
\usepackage{tcolorbox}
\usepackage{alltt}
\usepackage{listings}
\usepackage{hyperref}
\usepackage{algorithm}
\usepackage{algpseudocode}
\usepackage{amssymb}

\newtheorem{lemma}{Lemma}
\newtheorem{theorem}{Theorem}

\title[Projective Affine Body Dynamics for Multibody Systems]%
{Projective Affine Body Dynamics for Multibody Systems}

\author[Zimeng Ye et al.]
{\parbox{\textwidth}{\centering Zimeng Ye$^{1,2}$\orcid{0009-0005-0104-7315}, Xiaowei He\thanks{Corresponding: xiaowei@iscas.ac.cn}$^{1}$\orcid{0000-0002-8870-2482}, Yuzhong Guo$^{1}$\orcid{0009-0000-4578-9701}, Yin Yang$^{3}$\orcid{0000-0001-7645-5931}, Chenfanfu Jiang$^{4}$\orcid{0000-0003-3506-0583} and Hongan Wang$^{1,2}$}
\\
{\parbox{\textwidth}{\centering $^1$Institute of Software, Chinese Academy of Sciences, China\\$^2$University of Chinese Academy of Sciences, China\\$^3$The University of Utah, USA\\$^4$ University of California, Los Angeles, USA}
}
}
\newcommand*{\highlight}{\textcolor{black}}
\begin{document}
	
	% uncomment for using teaser
	\teaser{
		\includegraphics[width=\textwidth]{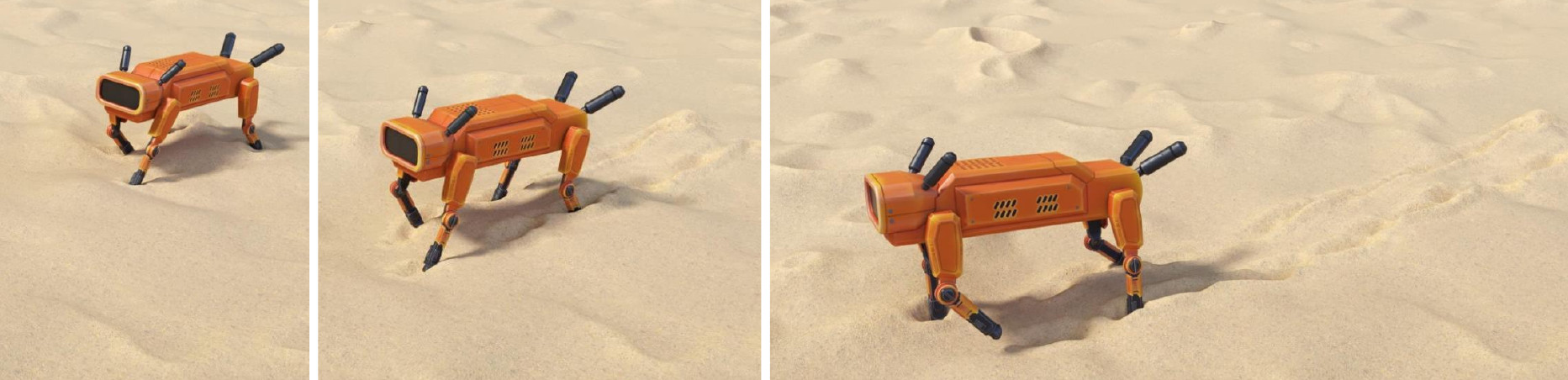}
		\caption{Robotic dog. Our method offers a real-time simulation solution for multibody systems with intricate joints and contact. The simulation of a quadrupedal robotic dog walking on sandy terrain achieves a frame rate of 56 FPS.}
		\label{fig:dog}
	}
	
	%\linenumbers
	\maketitle
	%-------------------------------------------------------------------------
	\begin{abstract}
   Multibody systems have widespread applications in diverse fields such as robotics, entertainment, and animation. 
   Their capability to model and simulate intricate interactions among interconnected bodies makes them invaluable in various domains.
   However, complexity arises with factors like non-smooth contact handling, nonlinearity in flexible joints, and parallelization challenges. 
   We introduce a stable and highly parallel GPU algorithm within affine body dynamics for solving constrained multibody dynamics with nonlinear constraints.
   Our innovation involves reformulating constrained multibody dynamics into a variational form, treating the system as a set of affine bodies connected with peridynamic bonds. 
   This formulation provides a unified model for affine body dynamics, constraints, and contact within the peridynamics framework. 
   It also facilitates the integration of the semi-implicit successive substitution method to solve nonlinear optimization in a global-local iterative manner.
   The proposed method obviates the necessity of assembling a global Hessian, rendering it highly suitable for efficient implementation on GPUs. 
   This allows real-time simulation of intricate interactions involving both rigid and flexible bodies, encompassing contact, joint constraints, and friction.
   \vspace{\baselineskip}  
%-------------------------------------------------------------------------
%  ACM CCS 1998
%  (see https://www.acm.org/publications/computing-classification-system/1998)
% \begin{classification} % according to https://www.acm.org/publications/computing-classification-system/1998
% \CCScat{Computer Graphics}{I.3.3}{Picture/Image Generation}{Line and curve generation}
% \end{classification}
%-------------------------------------------------------------------------
%  ACM CCS 2012
%   (see https://www.acm.org/publications/class-2012)
%The tool at \url{http://dl.acm.org/ccs.cfm} can be used to generate
% CCS codes.
%Example:
\begin{CCSXML}
<ccs2012>
<concept>
<concept_id>10010147.10010371.10010352.10010379</concept_id>
<concept_desc>Computing methodologies~Physical simulation</concept_desc>
<concept_significance>500</concept_significance>
</concept>
</ccs2012>
\end{CCSXML}

\ccsdesc[500]{Computing methodologies~Physical simulation}

\printccsdesc   
\end{abstract}  
	%-------------------------------------------------------------------------
	\section{Introduction}
	
	Multibody systems manifest two distinctive features: system components undergoing finite motions and constraints imposing restrictions on the relative motions.
	Modern tools for multibody dynamics analysis have been developed to address both linear and nonlinear constraints with arbitrary topologies~\cite{Bauchau:2011:Flexible,Deng:2020:Rigid,Probst:2023:Monolithic}.
	Nevertheless, the situation can become significantly complex when considering additional factors, including non-smooth contact handling~\cite{Macklin:2019:NNM,Andrews:2022:Contact}, nonlinearity in flexible bodies~\cite{Deng:2020:Rigid}, and the parallelization of algorithms to leverage modern computing architectures~\cite{Negrut:2014:Parallel,Tasora:2010:GPU}.
	
	The prevailing method for solving constrained multibody dynamics typically begins with formulating the equations of motion with degrees of freedoms (DOFs) for rigid bodies. 
	Subsequently, an additional step involves deriving constraint forces, consolidating the problem into a Linear Complementarity Problem (LCP)~\cite{Andrews:2017:Geometric}.
	Although formulating multibody dynamics as an LCP problem reduces the numerical implementation complexity, it does not adequately address our requirements for modeling complex real-world scenarios.
	For instance, both the contact and joint constraints may demonstrate non-smoothness and nonlinearity.
	To address non-smoothness in contact handling, Chen et al.~\shortcite{Chen:2022:UNB} introduced Incremental Potential Contact (IPC) into multibody dynamics to convert the contact problem into a nonlinear optimization problem; therefore, a unified Newton barrier method can be used to solve complex multibody systems involving contact, friction and articulation constraints.
	However, Newton's method encounters performance bottlenecks when solving problems involving nonlinear constraints.
	Note that as the contacts are actively changing during the simulation, the time-consuming construction of the global Hessian needs to be updated frequently.
	Moreover, the storage pattern of the Hessian matrix becomes unpredictable due to the influence of active contacts on the values of non-zero off-diagonal elements. 
	This complexity poses challenges for achieving an efficient GPU implementation.
	
	This work aims to introduce a GPU-friendly, highly parallel method for addressing the nonlinear optimization problem in constrained multibody dynamics within affine body dynamics. 
	We conceptualize the multibody system as a set of affine bodies subject to rigidity constraints, which allows the use of a projective local/global alternating solver.
	This procedure exhibits similarities to the solver proposed by Li et al.~\shortcite{Li:2019:Fast}. 
	Notably, a significant distinction lies in our formulation of the global energy function, expressed in terms of affine coordinates rather than the reduced degrees of freedom (DOFs) of rigid bodies.
	This distinctive approach enables the modeling of constraints as peridynamic bonds, facilitating the integration of the semi-implicit successive substitution method (SISSM)~\cite{Lu:2023:PPM} to address nonlinear terms in the global step.
	The major contributions of this work include:
	\begin{itemize}
		\item A projective method for solving nonlinear optimization problems that is Hessian-free and relies solely on the first-order derivatives of the objective function;
		\item A novel method to calculate the inverse of the $12 \times 12$ block diagonal matrix, facilitating the development of a highly GPU-compatible algorithm for constrained multibody dynamics;
		\item A method to recover angular velocities of rigid bodies from affine body dynamics by motivating its relationship to both rigid body dynamics and deformable body dynamics.  
	\end{itemize}
	
	%-------------------------------------------------------------------------
	\section{Related Work}
	\subsection{Multibody Dynamics}
	Multibody dynamics is an old subject that focuses on the analysis and simulation of the dynamic behavior of interconnected rigid or flexible bodies.
	While its study can be traced back to the end of the 19th century, it was not modernized until the 1960s when computers were developed~\cite{bremer2008elastic}.
	Addressing the intricacies of highly complex mechanical systems, which encompass numerous interconnected rigid or deformable bodies with joints, contacts, and friction, poses several challenges. 
	These challenges encompass aspects such as system representation~\cite{David:1996:LTD,francu2017unified,gissler2019interlinked}, computational costs~\cite{munawar2020open,Wang:2019:REF}, stability during large time-steps~\cite{Muller:2020:Detailed}, and more~\cite{de2012kinematic}.
	
	In pure rigid body dynamics, early works simply use impulses to prevent interpenetration, yet stability is only guaranteed with some ad hoc treatments, such as velocity thresholding~\cite{Mirtich:1995:Impulse}.
	Guendelman et al.~\shortcite{Guendelman:2003:NRB} introduced an innovative time integration scheme, eliminating the requirement for ad hoc threshold velocities. 
	Additionally, they proposed a novel shock propagation method that enhances the efficiency and visual accuracy in the stacking of objects.
	Kaufman et al.~\shortcite{Kaufman:2005:FFD} proposed to employ a novel contact model that uses mass, location, and velocity information at the moment of maximum compression to reduce the sensitivity of rigid body behaviors to variations in contact configuration.
	Catto~\shortcite{Catto:2005:Iterative} presented a method for caching contact forces that allows contact points to move from step to step.
	This temporal coherence helps amortize the cost of computing accurate contact forces over several frames.
	Erleben~\shortcite{Erleben:2007:VBS} formulated multibody dynamics as a velocity-based complementarity problem and used an iterative projected Gauss-Seidel (PGS) solver to achieve improvements in both performance and quality.
	With the advancement of modern GPUs, Tonge et al.~\shortcite{Tonge:2012:MSJ} expedited the method further by parallelizing each iteration and addressing contacts in blocks. 
	This approach yields wall clock performance comparable to PGS while mitigating visible artifacts.
	Macklin et al.~\shortcite{Macklin:2019:NMM} expanded the solution approach to address nonlinear complementarity problems using a non-smooth Newton iteration. 
	This extension enables the support of nonlinear dynamic models, encompassing hyperelastic deformable bodies and articulated rigid mechanisms.
	
	To guarantee intersection-free configurations at every time step, Ferguson et al.~\shortcite{Ferguson:2021:IFR} borrowed ideas from continuum mechanics and proposed to introduce the incremental potential contact framework to reduced coordinates and rigid body dynamics for time integration and contact response.
	Lan et al.~\shortcite{Lan:2022:ABD} introduced an affine body dynamics (ABD) framework that treats rigid bodies as affine bodies, thus enabling the simulation of both stiff and soft materials within a unified framework.
	With the assistance of ABD, Chen et al.~\shortcite{Chen:2022:UNB} introduced a simulation framework for multibody dynamics employing universal variational integration. 
	This framework facilitates the coupling of rigid and soft bodies interconnected by both linear and nonlinear constraints.
	Recently, Giles et al~\shortcite{Giles:2025:AVBD} use the augmented Lagrangian formulation to enhance vertex block descent, therefore achieving superior performance compared to the state-of-the-art alternatives.
	
	\vspace{-0.15in}
	\subsection{Projective Dynamics}
	The inception of projective dynamics can be traced to position-based dynamics~\cite{Muller:2007:Position,bender2014survey}, where force-based actions are replaced with position-based constraints. 
	While the general position-based approach proves robust in modeling diverse physical phenomena, including elasticity~\cite{Muller:2011:SSO}, contact, friction~\cite{Macklin:2014:UPP}, and fluid incompressibility~\cite{macklin2013position}, its reliability is hindered by sensitivity to numerical parameters such as the simulation time step and iteration count~\cite{Macklin:2016:XPBD}.  

	Liu et al.~\shortcite{Liu:2013:FSM} led the effort to reframe the implicit Euler integration method as an energy minimization problem, solving the linear system globally and addressing all nonlinear terms locally. 
	Bouaziz et al.~\shortcite{Bouaziz:2014:PD} formalized this method by introducing a distance measure to bridge the global and local solvers. 
	Consequently, the global system matrix remains constant and can thus be prefactored during initialization, enabling highly efficient global solves throughout the runtime. 
	\highlight{Overby et al.~\shortcite{Overby:2017:ADMM} further interpreted projective dynamics from the perspective of the alternating direction method of multipliers (ADMM), extending the framework to more general hyperelastic models and dynamic constraints.}

	While projective dynamics inherently exhibits nonlinear characteristics, its convergence behavior closely resembles that of an iterative method employed to solve linear systems. 
	Wang~\shortcite{Wang:2015:CSI} therefore proposed to use the Chebyshev approach to accelerate the convergence rate. 
	While earlier methodologies, such as position-based dynamics or projective dynamics, cater to a restricted range of materials, Liu et al.~\shortcite{Liu:2017:QNM} redefined projective dynamics as a quasi-Newton method.  
	They suggested enhancing the convergence speed by integrating the quasi-Newton method with L-BFGS. 
	He et al.~\shortcite{He:2018:PPM} introduced projective dynamics into peridynamics to model versatile elastoplastic materials with different dimensions. 
	\highlight{To improve the robustness of projective optimization under large rotational motions, Brown and Narain~\shortcite{Brown:2021:WRAPD} proposed a rotation-aware ADMM formulation with adaptive weighting strategies. 
	Recent work has also explored improved treatments of rotational degrees of freedom. 
	Romanyà-Serrasolsas et al.~\shortcite{Romanya:2025:Painless} introduced a differentiable rotation dynamics formulation based on Lie algebra representations, enabling efficient computation of rotational derivatives.}

	Regrettably, the inherent projective nature of the proposed simulator necessitates the definition of the elastic constitutive model in a specific form, and the methodology for modeling general hyperelastic materials remains elusive. 
	In the context of projective dynamics, effectively modeling general hyperelastic materials involves the incorporation of hyperelastic energy functions into the overarching solution.  
	Lu et al.~\shortcite{Lu:2023:PPM} have consequently presented the semi-implicit successive substitution method (SISSM) to address hyperelastic energy functions in the global solve. 
	The adaptability afforded by employing SISSM to incorporate nonlinear terms into the global step not only facilitates a GPU-friendly implementation, but also expands the range of possibilities for projective dynamics in addressing nonlinear optimization problems.
	
	\begin{figure}[t]
		\centering
		\subfigure[]{\includegraphics[width=1.0\linewidth]{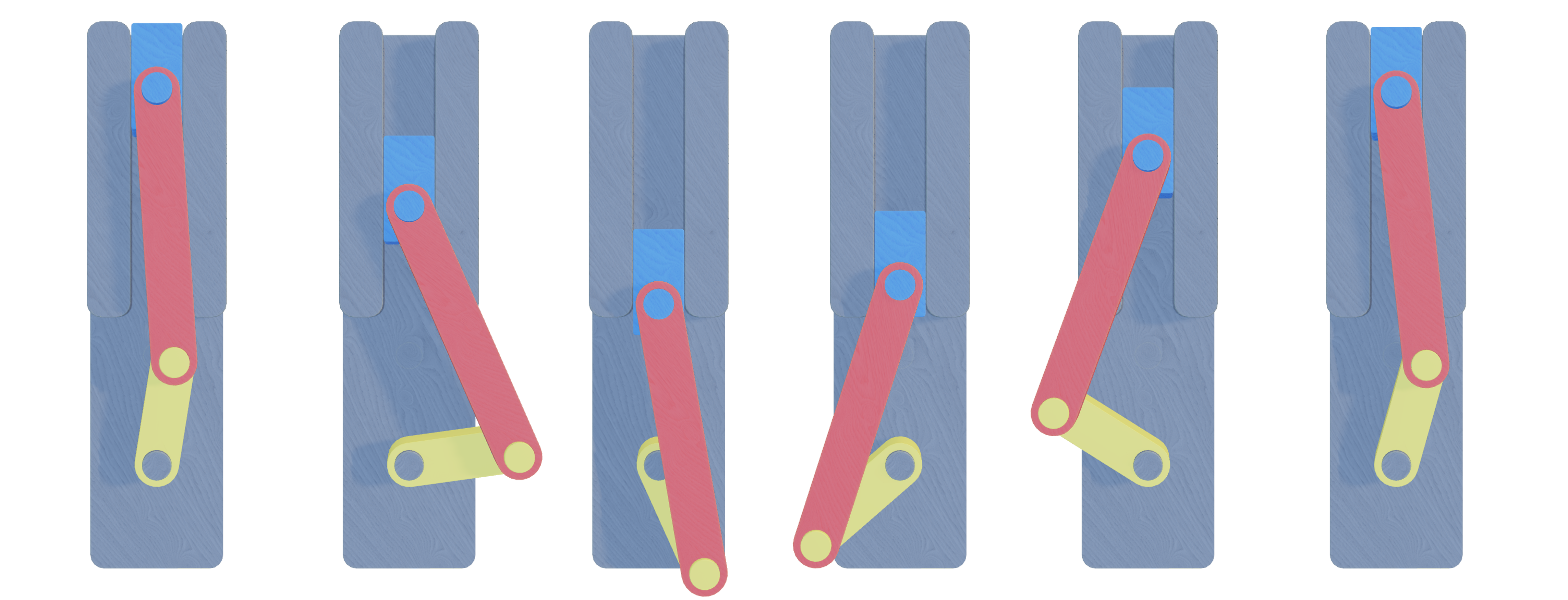}}
		\vspace{-0.15in}
		\subfigure[]{\includegraphics[width=1.0\linewidth]{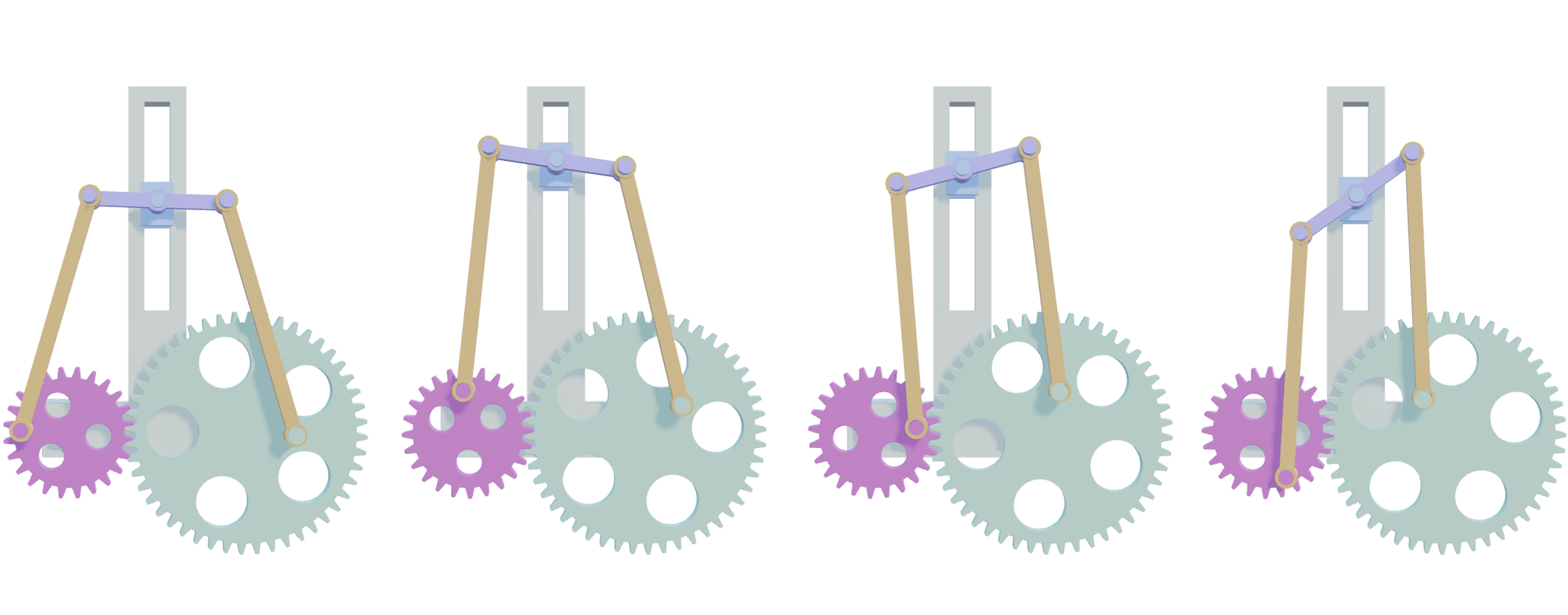}}
		\vspace{-0.08in}
		\caption{Our method is able to simulate complex mechanisms that involve different kinds of joints, contact and friction.}
		\label{fig:linkage}
	\end{figure}
	\vspace{-0.26in}
	\section{Foundations}
	
	%\subsection{Affine Body Dynamics}
	Following the standard ABD by Lan et al.~\shortcite{Lan:2022:ABD}, the kinematics of an affine body is described with a time-varying linear transform $\mathbf{A}\left( t \right) \in {\mathbb{R}^{3 \times 3}}$ and a translation $\mathbf{c}(t) \in {\mathbb{R}^3}$.
	The world coordinates of a material point $\xi$ in body $i$ are given as an affine map
	\begin{equation}
		{\mathbf{y}_{i\xi }} = {\mathbf{A}_i}{\mathbf{x}_{i\xi }} + {\mathbf{c}_i} = \mathbf{J}_{i\xi} \mathbf{q}_i, 
	\end{equation}
	where $\mathbf{x}_{i\xi }$ are the material coordinates defined in body $i$'s frame, $\mathbf{q} = {\left( {{\mathbf{c}^T},\mathbf{a}_1^T,\mathbf{a}_2^T,\mathbf{a}_3^T} \right)^T} \in {\mathbb{R}^{12}}$ with $\mathbf{A} = {[{\mathbf{a}_1},{\mathbf{a}_2},{\mathbf{a}_3}]^T}$ represent the generalized coordinates of body $i$, and ${\mathbf{J}_{i\xi }} = [{\mathbf{I}_3},{\mathbf{I}_3} \otimes {\mathbf{x}_{i\xi }}]$ denotes the Jacobian with $\mathbf{I}_3$ representing the $3 \times 3$ identity matrix.
	To rigidify each affine body, a stiff orthogonality potential
	\begin{equation}
		V_{\bot} \left( \mathbf{q} \right) = \kappa \left\| {\mathbf{A}{\mathbf{A}^T} - \mathbf{I}} \right\|_F^2
	\end{equation}
	is further defined to restore rigid body's rotational DOFs with energy minimization, where $\kappa$ is a constant used to control rigidity.
	
	When each affine body is simulated independently, i.e., no contacts or constraints are considered, the global Hessian for the whole system is simply a $12\times12$ block diagonal matrix, therefore its implementation can be fully parallelized with GPU acceleration.
	However, when contacts or constraints are included, off-diagonal terms will be introduced into the global Hessian, and the unpredictable pattern of the global Hessian can greatly complicate the parallelization as contacts or constraints are actively updated during the simulation.
	
	To alleviate the computational cost, Lan and colleagues~\shortcite{Lan:2022:ABD} proposed to utilize the i-AABBs to speed up the global Hessian assembly.
	However, it is still not possible to estimate and pre-allocate adequate space to store the corresponding $12 \times 12$ off-diagonal blocks due to the sparsity of collision detection.
	For a GPU friendly implementation, it is preferred that the optimal memory cost is a linear function of the number of DOFs and whose upper limit can be estimated at the beginning of simulation.
	Therefore, the objective of this work is to develop a matrix free solver that can model contact, friction, and joints simultaneously and run parallelly on GPU.
	
	\vspace{-0.06in}

	\vspace{-0.06in}
	\section{Projective Affine Body Dynamics}
	To battle above mentioned challenges, let us first revisit the optimization problem for multibody dynamics, then give our solution.
	Table \ref{table:notation} summarizes all notations to facilitate the following discussion.
	\begin{table}[h!]
		\begin{center}
			\caption{Notations.}
			\begin{tabular}{l|l} % <-- Alignments: 1st column left, 2nd middle and 3rd right, with vertical lines in between
				\hline
				\textbf{Notation} & \textbf{Description} \\
				\hline
				$h$ & Time step.      \\
				$\mathbf{q}$ & Generalized coordinates of affine bodies. \\
				$\mathbf{c}$ & Translational coordinates of affine bodies. \\
				$\mathbf{A}$ & Affine matrix of affine bodies \\
				$\mathbf{J}$ & Jacobian of affine bodies.          \\
				$\mathbf{M}$ & Generalized mass matrix of affine bodies. \\
				$\mathbf{H}$ & $12\times12$ coefficient matrix for each affine body.           \\
				$\Pi$ & Affine moment of inertia. \\
				$\mathbf{v}$ & Translational velocity of rigid bodies. \\ 
				$\omega$ & Angular velocity of rigid bodies. \\ 
				$[\mathbf{\omega} ]_ \times$ & \highlight{Skew-symmetric} matrix of $\omega$. \\ 
				$\mathbf{R}$ & Rotation matrix of rigid bodies. \\ 
				$\Psi$ & Potential energy of a bond.\\
				$\varphi$ & Time derivative of $\Psi$ with respect to $d$. \\
				$\mathbf{f}$ & Generalized external force. \\
				$\mathbf{y}$ & World coordinates of contact (constraint) points.  \\
				$\mathbf{x}$ & Material coordinates of contact (constraint) points. \\
				$\mathbf{n}$ & Normal vector along a bond.           \\
				\highlight{$\mathbf{Q}$} & \highlight{Quaternion of rigid bodies.} \\
				\hline
			\end{tabular}
			\label{table:notation}
		\end{center}
	\end{table}
	
	\vspace{-0.06in}
	\subsection{Motivation}
	Assume a multibody system is modeled as a set of $N$ rigid bodies connected by joints to restrict their relative motion.
	The dynamics of a rigid body is uniquely determined by its generalized position $\mathbf{q}$, velocity $\dot{\mathbf{q}}$ and acceleration $\ddot{\mathbf{q}}$.
	Each joint is then defined as functions of the generalized coordinates that can impose from zero to six constraints on the relative motion of a pair of rigid bodies.
	Note if the constraint is a linear function of the generalized coordinates, its time derivative can be uniformly expressed as a product of the Jacobian matrix and the generalized velocity~\cite{chappuis2013constraints}.
	%Hence, solving constrained multibody dynamics necessitates solving only a linear system of equations.
	
	%\begin{figure}[t]
	%  \centering
	%  \includegraphics[width=1.0\linewidth]{../images/primal.pdf}
	%  \caption{(a)Discrete perspective; (b) Continuum perspective. }
	%  \vspace{-0.12in}
	%  \label{fig:duality}
	%\end{figure}
	
	Modeling multibody dynamics becomes challenging when introducing nonlinear constraints, such as incorporating contact potential. 
	Our pivotal approach involves conceptualizing a multibody system as a set of affine bodies connected by peridynamic bonds. 
	These bonds subsequently impose additional forces on each affine body, and the equation of motion for affine bodies could possibly be written as~\cite{Udwadia:1992:New,Lan:2022:ABD}
	\begin{equation}
		\mathbf{M}\ddot{\mathbf{q}} =  - \sum\limits_\xi  {\nabla {\Psi _\xi (\mathbf{q}) }}  + \mathbf{f},
		\label{eq:govern}
	\end{equation}
	where $\mathbf{q} \in \mathbb{R}^{12N}$ represents generalized coordinates of affine bodies, $\mathbf{f} \in \mathbb{R}^{12N}$ represents the generalized external force, and $\Psi _\xi$ represents the bond energy connected to the material point $\xi$.
	Following the update rule of an implicit time integration~\cite{Bouaziz:2014:PD}
	\begin{equation}
		\begin{array}{l}
			{\mathbf{q}^{n + 1}} = {\mathbf{q}^n} + h{{\dot{\mathbf{q}}}^{n + 1}}\\
			{{\dot{\mathbf{q}}}^{n + 1}} = {{\dot{\mathbf{q}}}^n} + h{\mathbf{M}^{ - 1}}\left( { - \sum\nolimits_\xi  {\nabla {\Psi _\xi }\left( {{\mathbf{q}^{n + 1}}} \right)}  + \mathbf{f}} \right)
		\end{array},
		\label{eq:integration}
	\end{equation}
	solving the above equation of motion for affine bodies is equivalent to solving the following optimization problem
	\begin{equation}
		\arg \mathop {\min }\limits_{\mathbf{q}^{n+1}} \frac{1}{{2{h^2}}}\left\| {{\mathbf{M}^{\frac{1}{2}}}\left( {{\mathbf{q}}^{n+1} - {\mathbf{s}^n}} \right)} \right\|_F^2 +  \sum\limits_{\xi} {{\Psi _{\xi}}\left( {{\mathbf{q}}^{n+1}} \right)},
		\label{eq:objective_implicit}
	\end{equation}
	where $\mathbf{s}^n = \mathbf{q}^n + h\dot{\mathbf{q}} + h^2 \mathbf{M}^{-1}\mathbf{f}$.
	Before detailing the solution to the above optimization problem, we first derive several relationships that are fundamental to our method.

	\subsubsection{The relationship to rigid body dynamics}
	%The dependence of $\dot{\mathbf{R}}(t)$ on $\mathbf{R}(t)$ significantly complicates the solution process for rigid body dynamics.
	%Similarly, to accurately capture rotational behavior in affine body dynamics, the current orientation of each affine body must be taken into account during the implicit time integration described in Eq.~\ref{eq:integration}.
	%Since directly incorporating the current rotation matrix on the right-hand side of Eq.~\ref{eq:integration} is not practically feasible, a more efficient approach is to transform all variables from the world frame to the body frame.
	%For instance, by rotating $\mathbf{R}(t)$ into its corresponding body frame, the time derivative of the rotation matrix simplifies to
	%\begin{equation}
	%\dot {\bar{\mathbf{R}}}\left( t \right) = {[\mathbf{\omega} ]_ \times }.
	%\end{equation}
	%Motivated by this observation, we now present a formulation of affine body dynamics within the projective dynamics framework.
	
	\begin{theorem}
		If $\mathbf{f}_{i\xi}$ is the concentrated force imposed on affine body $i$, then Newton's second law in terms of affine coordinates is formulated as
		\begin{equation}
			\mathbf{M}_i \ddot{\mathbf{q}}_i = \sum\limits_{\xi} {\mathbf{J}^T_{i\xi}}\mathbf{f}_{i\xi},
		\end{equation}
		where given the mass density distribution function $\rho$, the generalized mass matrix $\mathbf{M}$ is $12\times12$ block diagonal
		\begin{equation}
			\mathbf{M}_i = \left[ {\begin{array}{*{20}{c}}
					{m_i \mathbf{I}}&0&0&0\\
					0&{\Pi _i}&0&0\\
					0&0&{\Pi _i}&0\\
					0&0&0&{\Pi _i}
			\end{array}} \right],{\kern 10pt}
			\Pi _i  = \int\nolimits_\Omega  {\rho \mathbf{x}_i \mathbf{x}_i^T dV}.
		\end{equation}
	\end{theorem}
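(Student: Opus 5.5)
The plan is to derive the statement from the principle of virtual work (equivalently, d'Alembert's principle) applied to the continuum occupying $\Omega$, with the affine map as the kinematic ansatz. Each material point $\xi$ of body $i$ moves as $\mathbf{y}_{i\xi} = \mathbf{J}_{i\xi}\mathbf{q}_i$, where $\mathbf{J}_{i\xi}$ depends only on the material coordinates $\mathbf{x}_{i\xi}$ and not on time. This gives $\ddot{\mathbf{y}}_{i\xi} = \mathbf{J}_{i\xi}\ddot{\mathbf{q}}_i$, and any admissible virtual displacement has the form $\delta\mathbf{y}_{i\xi} = \mathbf{J}_{i\xi}\delta\mathbf{q}_i$.

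The first step is to write the pointwise Newton law as a virtual work balance. Inertial forces $\rho\ddot{\mathbf{y}}\,dV$ must balance the applied concentrated forces $\mathbf{f}_{i\xi}$ for every admissible variation:
\begin{equation*}
\int_\Omega \rho\, \delta\mathbf{y}^T \ddot{\mathbf{y}}\, dV = \sum_\xi \delta\mathbf{y}_{i\xi}^T \mathbf{f}_{i\xi}.
\end{equation*}
Substituting the affine ansatz turns this into
\begin{equation*}
\delta\mathbf{q}_i^T\Big(\int_\Omega \rho\,\mathbf{J}^T\mathbf{J}\,dV\Big)\ddot{\mathbf{q}}_i = \delta\mathbf{q}_i^T\sum_\xi \mathbf{J}_{i\xi}^T\mathbf{f}_{i\xi}.
\end{equation*}
Since $\delta\mathbf{q}_i\in\mathbb{R}^{12}$ is arbitrary, we get $\mathbf{M}_i\ddot{\mathbf{q}}_i=\sum_\xi\mathbf{J}_{i\xi}^T\mathbf{f}_{i\xi}$ with $\mathbf{M}_i=\int_\Omega\rho\,\mathbf{J}^T\mathbf{J}\,dV$. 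An equivalent route is through the Lagrangian: use the kinetic energy $T=\tfrac12\int\rho|\mathbf{J}\dot{\mathbf{q}}|^2dV$ and generalized forces $\partial(\sum_\xi\mathbf{f}_{i\xi}^T\mathbf{y}_{i\xi})/\partial\mathbf{q}_i$.

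The second step is to compute $\mathbf{J}^T\mathbf{J}$ blockwise from $\mathbf{J}=[\mathbf{I}_3,\ \mathbf{I}_3\otimes\mathbf{x}^T]$. Here $\mathbf{y}=\mathbf{c}+\mathbf{A}\mathbf{x}$ has $k$-th component $c_k+\mathbf{a}_k^T\mathbf{x}$, so the Kronecker factor should be read as $\mathbf{x}^T$ acting on the stacked rows $\mathbf{a}_k$. The blocks are:
\begin{itemize}
\item the $\mathbf{c}$--$\mathbf{c}$ block is $\mathbf{I}_3$, which integrates to $m_i\mathbf{I}$;
\item the $\mathbf{c}$--$\mathbf{a}_k$ block is $\mathbf{e}_k\mathbf{x}^T$;
\item the $\mathbf{a}_k$--$\mathbf{a}_l$ block is $\delta_{kl}\,\mathbf{x}\mathbf{x}^T$, which integrates to $\delta_{kl}\Pi_i$.
\end{itemize}

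The main obstacle is showing that the off-diagonal translational–affine blocks $\int_\Omega\rho\,\mathbf{e}_k\mathbf{x}^TdV$ vanish. They are proportional to the first mass moment $\int_\Omega\rho\,\mathbf{x}\,dV$. The resolution is a convention the theorem implicitly assumes: the material frame is placed at the center of mass, so this moment is zero and $\mathbf{c}$ is the world position of the center of mass. I would state this convention explicitly, noting that it can always be arranged by translating the rest frame, which only shifts $\mathbf{c}$. With that, $\mathbf{M}_i$ is exactly block diagonal, $\mathrm{diag}(m_i\mathbf{I},\Pi_i,\Pi_i,\Pi_i)$.

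Finally, I would remark that each body's equations are decoupled from the others, because bodies interact only through the forces $\mathbf{f}_{i\xi}$. The global $\mathbf{M}$ is therefore block diagonal with these $12\times12$ blocks, which matches the form used in Eq.~\eqref{eq:govern}.
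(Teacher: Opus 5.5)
Your proof is correct and follows the paper's argument. The paper writes the pointwise law $\rho\ddot{\mathbf{y}}=\mathbf{f}$ with Dirac-delta force densities, multiplies by $\mathbf{J}^T$, integrates, and substitutes $\ddot{\mathbf{y}}=\mathbf{J}\ddot{\mathbf{q}}_i$, which is your virtual-work projection with $\delta\mathbf{y}=\mathbf{J}\delta\mathbf{q}_i$; your weak form is, if anything, cleaner, since the pointwise identity cannot literally hold with concentrated forces. Your blockwise evaluation of $\int_\Omega\rho\,\mathbf{J}^T\mathbf{J}\,dV$ and the explicit center-of-mass convention $\int_\Omega\rho\,\mathbf{x}\,dV=0$, which eliminates the $\mathbf{c}$--$\mathbf{a}_k$ blocks, fill in a step the paper leaves implicit when it says the form of $\mathbf{M}_i$ ``can be derived from the above derivation.''
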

	In addition, we define $\Pi _i$ as the \emph{affine moment of inertia}.
	\begin{proof}
		Given concentrated forces $\mathbf{f}_{i\xi}$, we have the force density distribution function $ \mathbf{f}(\mathbf{x}') = \sum\nolimits_{\xi} \mathbf{f}_{i\xi }(\mathbf{x}') = \sum\nolimits_{\xi}{\mathbf{f}_{i\xi }}\delta \left( {\mathbf{x} - \mathbf{x}'} \right)$, where $\delta \left( {\mathbf{x} - \mathbf{x}'} \right)$ is the Dirac delta function.
		By applying Newton's second law, we have 
		\begin{equation}
			\begin{array}{l}
				\begin{aligned}
					\rho \left( {\mathbf{x}'} \right)\ddot{\mathbf{y}}\left( {\mathbf{x}'} \right) &= \mathbf{f}\left( {\mathbf{x}'} \right)\\
					\int_\Omega  {\rho \left( {\mathbf{x}'} \right)\ddot{\mathbf{y}}\left( {\mathbf{x}'} \right)d\Omega '}  &= \int_\Omega  {{\mathbf{J}^T}\left( {\mathbf{x}'} \right)\mathbf{f}\left( {\mathbf{x}'} \right)d\Omega '} \\
					\int_\Omega  {\rho \left( {\mathbf{x}'} \right){\mathbf{J}^T}\left( {\mathbf{x}'} \right)\mathbf{J}\left( {\mathbf{x}'} \right){{\ddot{\mathbf{q}}}_i}d\Omega '}  &= \sum\limits_\xi  {\int_\Omega  {{\mathbf{J}^T}\left( {\mathbf{x}'} \right){\mathbf{f}_{i\xi }}\left( {\mathbf{x}'} \right)d\Omega '} } \\
					{\mathbf{M}_i}{{\ddot{\mathbf{q}}}_i} &= \sum\limits_\xi  {{\mathbf{J}_{i\xi }^T}{\mathbf{f}_{i\xi }}} 
				\end{aligned}
			\end{array}
			\label{eq:derive}
		\end{equation}
		Thus, the formulation for $\mathbf{M}_i$ and $\Pi _i$ can be derived from the above derivation.
	\end{proof}
	
	A deeper understanding of affine body dynamics can be achieved by further examining the properties of the rotation group $\mathrm{SO}(3)$. 
	For any $\mathbf{R}(t) \in \mathrm{SO}(3)$, it is well known that its time derivative is given by
	\begin{equation}
		\dot {\mathbf{R}}\left( t \right) = {[\mathbf{\omega} ]_ \times }\mathbf{R}\left( t \right),{\kern 5pt}{[\mathbf{\omega} ]_ \times } = \left[ {\begin{array}{*{20}{c}}
				0&{ - {\omega _z}}&{{\omega _y}}\\
				{{\omega _z}}&0&{ - {\omega _x}}\\
				{ - {\omega _y}}&{{\omega _x}}&0
		\end{array}} \right]
	\end{equation}
	where $\omega$ is the angular velocity expressed in the world frame, and $[\mathbf{\omega} ]_ \times$ denotes its associated skew-symmetric matrix.
	Note that the time derivative of the rotation matrix depends not only on the angular velocity but also on the current orientation of the rotation matrix itself.
	The following lemma presents a similar relationship between the affine matrix $\mathbf{A}$ and the angular velocity $\omega$.
	\begin{lemma}
		If an affine body is under a rigid transformation, we have $\dot {\mathbf{A}} = {[\mathbf{\omega} ]_ \times } \mathbf{A}$.
		\label{eq:dA}
	\end{lemma}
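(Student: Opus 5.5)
The plan is to reduce the lemma to the rotation identity $\dot{\mathbf{R}} = [\mathbf{\omega}]_\times \mathbf{R}$ recalled just above. The key is to pin down what ``under a rigid transformation'' means in affine coordinates. Every material point is mapped as $\mathbf{y}_{\xi} = \mathbf{A}\mathbf{x}_{\xi} + \mathbf{c}$. For the motion to be rigid, this map must agree at every time with a rigid motion $\mathbf{y}_\xi = \mathbf{R}(t)\mathbf{x}_\xi + \mathbf{t}(t)$ with $\mathbf{R}(t)\in \mathrm{SO}(3)$.

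First I will identify $\mathbf{A}$ with $\mathbf{R}$. Assume the material points of the body affinely span $\mathbb{R}^3$, which holds for any body with nonzero volume. Then equality of the two affine maps for all $\mathbf{x}_\xi$ forces $\mathbf{A}(t) = \mathbf{R}(t)$ and $\mathbf{c}(t) = \mathbf{t}(t)$. To see this, subtract the identities at two points to get $(\mathbf{A}-\mathbf{R})(\mathbf{x}_\xi - \mathbf{x}_\eta) = 0$; the differences $\mathbf{x}_\xi - \mathbf{x}_\eta$ span $\mathbb{R}^3$. Equivalently, one can take the rigid case to be the zero set of $V_\bot$, where $\mathbf{A}\mathbf{A}^T = \mathbf{I}$ and $\det \mathbf{A} = 1$ by continuity from the rest configuration, so $\mathbf{A}(t)\in\mathrm{SO}(3)$ directly.

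Next I will differentiate. Either route gives $\mathbf{A}(t)\in\mathrm{SO}(3)$ along the motion, so the standard identity applies and $\dot{\mathbf{A}} = \dot{\mathbf{R}} = [\mathbf{\omega}]_\times\mathbf{R} = [\mathbf{\omega}]_\times\mathbf{A}$. For completeness I would re-derive that identity briefly. Differentiating $\mathbf{A}\mathbf{A}^T = \mathbf{I}$ gives $\dot{\mathbf{A}}\mathbf{A}^T + \mathbf{A}\dot{\mathbf{A}}^T = 0$, so $\mathbf{W} := \dot{\mathbf{A}}\mathbf{A}^T$ is skew-symmetric. Hence $\mathbf{W} = [\mathbf{\omega}]_\times$ for a unique vector $\mathbf{\omega}$. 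Multiplying on the right by $\mathbf{A}$ and using $\mathbf{A}^T\mathbf{A}=\mathbf{I}$ yields $\dot{\mathbf{A}} = [\mathbf{\omega}]_\times \mathbf{A}$.

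I would close by checking that this $\mathbf{\omega}$ is the world-frame angular velocity. The velocity of a material point is $\dot{\mathbf{y}}_\xi = \dot{\mathbf{A}}\mathbf{x}_\xi + \dot{\mathbf{c}} = \mathbf{\omega}\times(\mathbf{y}_\xi - \mathbf{c}) + \dot{\mathbf{c}}$, which is the rigid velocity field about the origin of the body frame.

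The main obstacle is step one: stating precisely why rigidity in the affine parametrization forces $\mathbf{A}\in\mathrm{SO}(3)$ rather than merely $\mathbf{A}\mathbf{A}^T\approx\mathbf{I}$ up to the penalty $V_\bot$. I would state the lemma's hypothesis as exact orthogonality with positive determinant. Once that is fixed, the remaining steps are the routine $\mathrm{SO}(3)$ computation above.
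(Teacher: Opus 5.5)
Your proof is correct for the case you state, but it proves a narrower statement than the paper, by a different route. The paper never identifies $\mathbf{A}$ with a rotation. It writes the polar decomposition $\mathbf{A} = \mathbf{R}\mathbf{S}$ and reads ``under a rigid transformation'' as a rigid motion superposed on the current, possibly stretched, state, so $\mathbf{S}$ is frozen and only $\mathbf{R}$ moves. It then concludes $\dot{\mathbf{A}} = \dot{\mathbf{R}}\mathbf{S} = [\mathbf{\omega}]_\times\mathbf{R}\mathbf{S} = [\mathbf{\omega}]_\times\mathbf{A}$, taking $\dot{\mathbf{R}} = [\mathbf{\omega}]_\times\mathbf{R}$ as known.

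You instead require $\mathbf{A}\mathbf{A}^T = \mathbf{I}$ exactly, which is the special case $\mathbf{S} = \mathbf{I}$; that is why step one looked like the obstacle to you. In exchange your argument is self-contained. You re-derive the $\mathrm{SO}(3)$ identity from orthogonality (for which $\det\mathbf{A} = 1$ is not actually needed) and check that $\mathbf{\omega}$ is the world-frame angular velocity. The paper does neither.

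Your special case suffices where the lemma is used in the final velocity update of the algorithm, since $\mathbf{A}^n$ and $\mathbf{A}^{n+1}$ come out of the polar projection and are rotations. The paper's version, however, also covers a deformed affine body. This is consistent with its extraction $[\mathbf{\omega}]_\times = \dot{\mathbf{A}}\mathbf{A}^{-1}$, which uses an inverse rather than a transpose.

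You can recover that generality cheaply in either of two ways:
\begin{itemize}
\item Run your computation on the factor $\mathbf{R}$ with $\mathbf{S}$ held fixed.
\item Push your velocity-field viewpoint one step further. Preserving all current distances $|\mathbf{y}_\xi - \mathbf{y}_\eta|$ in a body of nonzero volume forces $\mathbf{A}^T\dot{\mathbf{A}}$ to be skew-symmetric. Hence the congruent matrix $\dot{\mathbf{A}}\mathbf{A}^{-1} = \mathbf{A}^{-T}(\mathbf{A}^T\dot{\mathbf{A}})\mathbf{A}^{-1}$ is skew-symmetric too. This gives the lemma for every invertible $\mathbf{A}$. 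It also justifies the paper's unproved claim that $\mathbf{S}$ stays unchanged, because $\dot{\mathbf{A}} = [\mathbf{\omega}]_\times\mathbf{A}$ implies $\frac{d}{dt}(\mathbf{A}^T\mathbf{A}) = 0$.
\end{itemize}
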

	\begin{proof}
		According to polar decomposition, the affine matrix has a factorization of the form $\mathbf{A} = \mathbf{R}\mathbf{S}$, where $\mathbf{R}$ is the rotation matrix and $\mathbf{S}$ is a positive semi-definite Hermitian matrix.
		When an affine body undergoes a rotation transformation, the matrix $\mathbf{S}$ remains unchanged, we thus have
		\[\dot {\mathbf{A}} = \dot {\mathbf{R}}{\mathbf{S}} = {[\omega ]_ \times }{\mathbf{R}\mathbf{S}} = {[\omega ]_ \times }\mathbf{A}\]
	\end{proof}
	
	\subsubsection{The relationship to deformable body dynamics}
	\begin{theorem}
		Let ${{\phi}\left( {{\mathbf{x}}} \right)}$ be the energy density function defined over the material space of affine body $i$, then Newton's second law in terms of affine coordinates is formulated as
		\begin{equation}
			{\mathbf{M}_i}{{\ddot{\mathbf{q}}}_i} = - \int_\Omega  {{\nabla _{{\mathbf{q}_i}}}\phi (\mathbf{x}) d\Omega }.
			\label{eq:abdInCon}
		\end{equation}
		
	\end{theorem}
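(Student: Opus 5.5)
We mirror the argument of the first theorem, with the concentrated forces replaced by a distributed force derived from the energy density. The starting point is the local form of Newton's second law at a material point $\mathbf{x}$ of body $i$, namely $\rho(\mathbf{x})\ddot{\mathbf{y}}(\mathbf{x}) = \mathbf{f}(\mathbf{x})$. The force density is conservative and given by $\mathbf{f}(\mathbf{x}) = -\nabla_{\mathbf{y}}\phi(\mathbf{x})$, i.e. minus the derivative of the energy density with respect to the world position $\mathbf{y}(\mathbf{x}) = \mathbf{J}(\mathbf{x})\mathbf{q}_i$. Using the kinematic relation $\mathbf{y} = \mathbf{J}\mathbf{q}_i$ with $\mathbf{J}$ independent of time, we have $\ddot{\mathbf{y}} = \mathbf{J}\ddot{\mathbf{q}}_i$.

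The steps, in order, are as follows.

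First, left-multiply the local equation by $\mathbf{J}^T(\mathbf{x})$ and integrate over $\Omega$.

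Second, on the left-hand side, pull $\ddot{\mathbf{q}}_i$ out of the integral, since it is spatially constant. This gives $\int_\Omega \rho\,\mathbf{J}^T\mathbf{J}\,d\Omega\;\ddot{\mathbf{q}}_i$, and the integral is exactly $\mathbf{M}_i$ from the first theorem. The block structure of $\mathbf{M}_i$ (with $m_i\mathbf{I}$ and three copies of $\Pi_i$) is inherited directly, assuming the material frame is centered at the center of mass so that the cross terms $\int\rho\,\mathbf{x}\,dV$ vanish, as is implicitly used there.

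Third, on the right-hand side, apply the chain rule. Since $\partial \mathbf{y}/\partial \mathbf{q}_i = \mathbf{J}$, we get $\nabla_{\mathbf{q}_i}\phi = \mathbf{J}^T\nabla_{\mathbf{y}}\phi$. Hence $\int_\Omega \mathbf{J}^T\mathbf{f}\,d\Omega = -\int_\Omega \nabla_{\mathbf{q}_i}\phi\,d\Omega$, which yields Eq.~\ref{eq:abdInCon}.

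The main obstacle is justifying the identification $\mathbf{f} = -\nabla_{\mathbf{y}}\phi$ when $\phi$ is an energy density defined on material space. For a genuine elastic density depending on the deformation gradient, the force density is a divergence of stress rather than a pointwise gradient, and the equivalence holds only after integration.

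To handle this, I would argue variationally rather than pointwise. The total potential is $\Phi(\mathbf{q}_i) = \int_\Omega \phi\,d\Omega$, and the generalized force is $-\nabla_{\mathbf{q}_i}\Phi$. Because the domain $\Omega$ is fixed in material space, differentiation under the integral sign is legitimate, assuming $\phi$ is smooth in $\mathbf{q}_i$. Combined with the mass-matrix computation above, this gives Lagrange's equations for the kinetic energy $\tfrac12\dot{\mathbf{q}}_i^T\mathbf{M}_i\dot{\mathbf{q}}_i$. This route avoids specifying how $\phi$ depends on $\mathbf{y}$. I would present it as the rigorous version and keep the pointwise chain-rule computation as the intuitive one.
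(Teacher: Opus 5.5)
Your proposal is correct and follows the paper's proof: the paper also sets $\mathbf{f} = -[\partial\phi/\partial\mathbf{y}]^T$, applies the chain rule $-\nabla_{\mathbf{q}_i}\phi = \mathbf{J}^T\mathbf{f}$, and substitutes this into the integrated form of Newton's law from the first theorem. Your variational backup, Lagrange's equations for $\tfrac12\dot{\mathbf{q}}_i^T\mathbf{M}_i\dot{\mathbf{q}}_i - \int_\Omega\phi\,d\Omega$, is sound and goes beyond the paper, which addresses neither the pointwise-gradient versus stress-divergence issue you raise nor the center-of-mass assumption needed for $\mathbf{M}_i$ to be block diagonal.
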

	\begin{proof}
		According to the constitutive relation in continuum mechanics, the elastic force is formulated as $\mathbf{f}(\mathbf{x}) = - \left[ {\frac{{\partial \phi ({\mathbf{x}})}}{{\partial {\mathbf{y}}}}} \right]^T $.
		By taking the following chain rule 
		\[
		- {\nabla _{{\mathbf{q}_i}}}\phi  = {\left[ { - \frac{{\partial \phi }}{{\partial {\mathbf{y}}}} \cdot \frac{{\partial {\mathbf{y}}}}{{\partial {\mathbf{q}_i}}}} \right]^T} = {\mathbf{J}^T}\left( \mathbf{x} \right)\mathbf{f}\left( \mathbf{x} \right),
		\]
		and replace the right hand side of Equation~\ref{eq:derive}, we can verify theorem 4.3.
	\end{proof}

	\begin{lemma}
		If $\Psi _{\xi} (\mathbf{q})$ is the potential energy of a bond connected to affine bodies, we have
		\begin{equation}
			\mathbf{M}_i \ddot{\mathbf{q}}_i = \sum\limits_{\xi}  { - \nabla _{{\mathbf{q}_i}} \Psi _{\xi} ({\mathbf{q}})}
		\end{equation}
	\end{lemma}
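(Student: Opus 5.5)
The plan is to reduce this lemma to the first theorem of this section, which gives $\mathbf{M}_i \ddot{\mathbf{q}}_i = \sum_\xi \mathbf{J}^T_{i\xi}\mathbf{f}_{i\xi}$ for concentrated forces. I would use the earlier deformable-body theorem (Eq.~\ref{eq:abdInCon}) only as a guide to the chain-rule mechanism. A peridynamic bond $\xi$ joins a material point $\mathbf{x}_{i\xi}$ on body $i$ to a point $\mathbf{x}_{j\xi}$ on another body $j$ (or to a fixed point or contact surface). Its energy depends on $\mathbf{q}$ only through the world positions of its endpoints, $\Psi_\xi(\mathbf{q}) = \Psi_\xi(\mathbf{y}_{i\xi}, \mathbf{y}_{j\xi})$, where $\mathbf{y}_{i\xi} = \mathbf{J}_{i\xi}\mathbf{q}_i$ and $\mathbf{y}_{j\xi} = \mathbf{J}_{j\xi}\mathbf{q}_j$. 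Each bond therefore applies a concentrated force at $\mathbf{x}_{i\xi}$ on body $i$. As in continuum mechanics, this force is minus the gradient of the energy with respect to the world position of the point:
\[
\mathbf{f}_{i\xi} = -\left[\frac{\partial \Psi_\xi}{\partial \mathbf{y}_{i\xi}}\right]^T .
\]

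The key step is the chain rule. Because $\mathbf{y}_{i\xi}$ is linear in $\mathbf{q}_i$ with constant Jacobian $\mathbf{J}_{i\xi}$, and the other endpoint $\mathbf{y}_{j\xi}$ does not depend on $\mathbf{q}_i$, we get
\[
-\nabla_{\mathbf{q}_i}\Psi_\xi(\mathbf{q}) = -\left[\frac{\partial \Psi_\xi}{\partial \mathbf{y}_{i\xi}}\,\frac{\partial \mathbf{y}_{i\xi}}{\partial \mathbf{q}_i}\right]^T = \mathbf{J}^T_{i\xi}\mathbf{f}_{i\xi}.
\]
The argument needs one case split. If body $i$ is not an endpoint of bond $\xi$, then $\nabla_{\mathbf{q}_i}\Psi_\xi = 0$, and such bonds add nothing to the sum. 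If both endpoints of a bond happen to lie on body $i$, the two contributions $\mathbf{J}^T_{i\xi}\mathbf{f}$ simply add, so the identity still holds term by term once each endpoint is counted as its own concentrated force.

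Summing over $\xi$ and substituting into the first theorem gives $\mathbf{M}_i\ddot{\mathbf{q}}_i = \sum_\xi -\nabla_{\mathbf{q}_i}\Psi_\xi(\mathbf{q})$, which is the claim. Stacking over $i$ recovers Eq.~\ref{eq:govern} with $\mathbf{f}=0$.

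The main obstacle is not computational but one of modelling. We must justify that the bond forces are conservative and act only at the bond endpoints, so that $\mathbf{f}_{i\xi}$ is exactly $-\partial\Psi_\xi/\partial\mathbf{y}_{i\xi}$. An alternative route treats the bond energy as an energy density concentrated by a Dirac delta at $\mathbf{x}_{i\xi}$, that is, $\phi(\mathbf{x}) = \sum_\xi \Psi_\xi\,\delta(\mathbf{x}-\mathbf{x}_{i\xi})$. Then Eq.~\ref{eq:abdInCon} applies directly, since the integral collapses to $-\sum_\xi\nabla_{\mathbf{q}_i}\Psi_\xi$. I would state the result via the first theorem and mention the delta-density view as the consistency check.
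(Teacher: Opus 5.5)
Your argument is correct, but your main line differs from the paper's. The paper's entire proof is the route you keep as a consistency check: set $\phi(\mathbf{x}') = \sum_\xi \Psi_\xi\,\delta(\mathbf{x}-\mathbf{x}')$ and substitute it into the deformable-body theorem, Eq.~\ref{eq:abdInCon}. You instead go back to the concentrated-force theorem and apply the chain rule bond by bond. The difference is largely one of ordering. The paper proves Eq.~\ref{eq:abdInCon} by exactly your identity $-\nabla_{\mathbf{q}_i}\phi = \mathbf{J}^T\mathbf{f}$ inserted into Eq.~\ref{eq:derive}, and the Dirac delta then collapses the integral back into a sum of point contributions; you simply skip the continuum detour. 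The paper's route gives brevity once Eq.~\ref{eq:abdInCon} is available, and it stays in the energy-density language used for deformable bodies. Your route gives explicitness. You identify $\mathbf{f}_{i\xi} = -(\partial\Psi_\xi/\partial\mathbf{y}_{i\xi})^T$ directly, rather than reinterpreting a two-body bond energy as a point ``energy density'' on body $i$ alone. You also spell out the bookkeeping that the delta version leaves implicit: bonds not touching body $i$, and bonds with both endpoints on body $i$. Both routes rest on the modelling assumption you flag, that $\Psi_\xi$ depends on $\mathbf{q}$ only through bond-endpoint positions. For a fully general $\Psi_\xi(\mathbf{q})$, Lagrange's equations with kinetic energy $\frac{1}{2}\dot{\mathbf{q}}^T\mathbf{M}\dot{\mathbf{q}}$ and constant $\mathbf{M}$ would give the lemma directly.
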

	\begin{proof}
		The above lemma can be verified by defining the energy density function as $\phi \left( {\mathbf{x}'} \right) = \sum\nolimits_{\xi} {\Psi _{\xi }}\delta \left( {\mathbf{x} - \mathbf{x}'} \right)$, and substitute $\phi \left( {\mathbf{x}'} \right)$ into Equation~\ref{eq:abdInCon}.
	\end{proof}
	
	\begin{lemma}[transformation of affine moment of inertia]
		After affine transformation, the current affine moment of inertia is updated to ${\Pi}_i = \mathbf{A}_i{\Pi}_0{\mathbf{A}_i^T}$, where $\Pi _0$ represents the initial value calculated in the material space.
	\end{lemma}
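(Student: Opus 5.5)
The plan is to read $\Pi_i$ as the second moment of the mass distribution taken in the \emph{current} (world) configuration rather than in material space. By the first Theorem above, $\Pi_0 = \int_\Omega \rho\, \mathbf{x}\mathbf{x}^T dV$ is defined over the material domain $\Omega$. The current affine moment of inertia is then the same integral written in world coordinates relative to the current centre $\mathbf{c}_i$:
\[
\Pi_i = \int_{\Omega_t} \rho_t(\mathbf{y})\,(\mathbf{y}-\mathbf{c}_i)(\mathbf{y}-\mathbf{c}_i)^T\, dV_t .
\]
The key step is a change of variables along the affine map $\mathbf{y} = \mathbf{A}_i\mathbf{x} + \mathbf{c}_i$.

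The steps, in order, are as follows.
\begin{enumerate}
\item Under this map, $\mathbf{y}-\mathbf{c}_i = \mathbf{A}_i\mathbf{x}$, and $dV_t = \det(\mathbf{A}_i)\, dV$.
\item Conservation of mass gives $\rho_t(\mathbf{y})\, dV_t = \rho(\mathbf{x})\, dV$, so the mass element pulls back exactly. The Jacobian determinant therefore cancels, and we never need $\mathbf{A}_i$ to be orthogonal.
\item Substituting,
\[
\Pi_i = \int_\Omega \rho(\mathbf{x})\,\mathbf{A}_i\mathbf{x}\mathbf{x}^T\mathbf{A}_i^T\, dV .
\]
\item Since $\mathbf{A}_i$ is constant over the body, it factors out of the integral by linearity. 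This yields $\Pi_i = \mathbf{A}_i\Pi_0\mathbf{A}_i^T$.
\end{enumerate}

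As a consistency check, for a rigid motion $\mathbf{A}_i = \mathbf{R}$ this reduces to the familiar rotation rule $\mathbf{R}\Pi_0\mathbf{R}^T$ for second moments. Differentiating it with Lemma~\ref{eq:dA} gives $\dot\Pi_i = [\omega]_\times \Pi_i - \Pi_i[\omega]_\times$. This is the expected transport law, and I would note it as corroboration.

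The main obstacle is conceptual rather than computational. One has to fix precisely what ``current affine moment of inertia'' means, since the theorem defined $\Pi_i$ only over material coordinates. I would state explicitly that it is measured about the current centre of mass, which requires $\mathbf{x}$ to be centred so that $\int\rho\,\mathbf{x}\,dV = 0$. I would also invoke mass conservation to justify the density pull-back. Once these are pinned down, the remaining algebra is immediate.
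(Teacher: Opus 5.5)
Your proposal is correct and follows essentially the same route as the paper: substitute the affine map ($\mathbf{x}_i = \mathbf{A}_i\mathbf{x}_0$, equivalently $\mathbf{y}-\mathbf{c}_i=\mathbf{A}_i\mathbf{x}$) into the second-moment integral and pull the constant $\mathbf{A}_i$ out by linearity. Your explicit mass-conservation pull-back $\rho_t\,dV_t=\rho\,dV$ and the centring remark make precise what the paper leaves implicit when it integrates over the same $\Omega$ with the same $\rho$.
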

	\begin{proof}
		According to the definition of $\Pi_i$, it follows
		\[\begin{array}{l}
			\begin{aligned}
				{\Pi _i} &= \int_\Omega  {\rho {{\mathbf{x}}_i}{\mathbf{x}}_i^Td\Omega} \\
				&= \int_\Omega  {\rho \left( {{\mathbf{A}_i}{{\mathbf{x}}_0}} \right){{\left( {{\mathbf{A}_i}{{\mathbf{x}}_0}} \right)}^T}d\Omega} \\
				&= {\mathbf{A}_i}\left( {\int_\Omega  {\rho {{\mathbf{x}}_0}{\mathbf{x}}_0^Td\Omega} } \right)\mathbf{A}_i^T\\
				&= {\mathbf{A}_i}{\Pi _0}\mathbf{A}_i^T
			\end{aligned}
		\end{array}\]
	\end{proof}

	\subsection{Projective Semi-implicit Solver}
	For simplicity, we uniformly refer to the bond associated with the material point $\xi$ as bond $\xi$ for following discussions.
	Consider an arbitrary bond $\xi$ that connects affine bodies $i$ and $j$. 
	The two contact points in world-space coordinates are given as follows:
	\begin{equation}
		\begin{array}{l}
			\begin{aligned}
				{\mathbf{y}_{i\xi }} &= {\mathbf{A}_i}{\mathbf{x}_{i\xi }} + {\mathbf{c}_i} = \mathbf{J}_{i\xi} \mathbf{q}_i \\
				{\mathbf{y}_{j\xi }} &= {\mathbf{A}_j}{\mathbf{x}_{j\xi }} + {\mathbf{c}_j} = \mathbf{J}_{j\xi} \mathbf{q}_j
			\end{aligned}
		\end{array}.
		\label{eq:twoends}
	\end{equation}
	Assuming that the energy of bond $\xi$ is a function of $d = | \mathbf{y}_{i\xi} - \mathbf{y}_{j\xi} |$, the dynamics of the affine multibody system can be formulated as the following optimization problem:
	\begin{equation}
		\arg \mathop {\min }\limits_\mathbf{q} \frac{1}{{2{h^2}}}\left\| {{\mathbf{M}^{\frac{1}{2}}}\left( {{\mathbf{q}} - {\mathbf{s}^n}} \right)} \right\|_F^2 +  \sum\limits_{\xi} {{\Psi _{\xi}}\left( {d} \right)}, {\kern 5pt}s.t.{\kern 5pt} {V_ \bot}\left( {{\mathbf{q}}} \right) = 0,
		\label{eq:varwithbonds}
	\end{equation}
	where the function ${V_ \bot}\left( {{\mathbf{q}}} \right) = 0$ ensures the rigidity of each affine body.
	Motivated by projected Jacobi or Gauss-Seidel that are commonly used in rigid body dynamics~\cite{Tonge:2012:MSJ}, our approach to solving the constrained optimization problem is divided into two steps.
	In the global step, we temporarily ignore the constraints of ${V_ \bot}\left( {{\mathbf{q}}} \right) = 0$ and apply the semi-implicit successive substitution method~\cite{He:2025:Semi,Lu:2023:PPM} to solve the following nonlinear optimization problem:
	\begin{equation}
		\mathbf{q}^{k + \frac{1}{2}} = \arg \mathop {\min }\limits_\mathbf{q} \frac{1}{{2{h^2}}}\left\| {{\mathbf{M}^{\frac{1}{2}}}\left( {{\mathbf{q}} - {\mathbf{s}^n}} \right)} \right\|_F^2 +  \sum\limits_{\xi} {{\Psi _{\xi}}\left( {d} \right)}.
		\label{eq:varwithbonds}
	\end{equation}
	Then, in the local step, we project $\mathbf{q}^{k + \frac{1}{2}}$ to the nearest feasible point to independently solve the following optimization problem for each affine body
	\begin{equation}
		\mathbf{q}^{k+1} = \arg \mathop {\min }\limits_\mathbf{p} \frac{1}{2}\left\| {{\mathbf{q}^{n + \frac{1}{2}}} - \mathbf{p}} \right\|^2_F,{\kern 5pt}s.t.{\kern 2pt}{V_ \bot }(\mathbf{p}) = 0.
	\end{equation}
	The two steps above are alternately iterated until a user-defined threshold is met or the maximum number of iterations is reached.

	\vspace{-0.18in}
	\subsubsection{Global Solve}
	\label{sec:global_solve}
	Minimizing Eq.~\ref{eq:varwithbonds} is equivalent to finding a solution that satisfies the following equations
	\begin{equation}
		\mathbf{M}_i\left( {\mathbf{q}_i - \mathbf{s}_i^n} \right) = {h^2}\sum\limits_{\xi} \underbrace{\frac{{\highlight{\varphi_\xi} }\left( {{d}} \right)}{d}{\mathbf{J}_{i\xi }^T\left( {{\mathbf{J}_{j\xi}}{\mathbf{q}_{j}} - {\mathbf{J}_{i\xi }}{\mathbf{q}_i}} \right)}}_{ - \frac{{\partial {\Psi _\xi }}}{{\partial {{\mathbf{q}}_i}}}},
		\label{eq:firstorder}
	\end{equation}
	where ${\varphi_\xi }$ represents the derivative of ${\Psi}_\xi$ with respect to $d$, $\mathbf{q}_i$ and $\mathbf{q}_j$ represents the generalized coordinates of affine body $i$ and $j$, respectively.
	Note that ${\varphi_\xi }( {{d}} )$ may include nonlinear terms, which typically necessitate time-consuming Newton-type algorithms with line-search filtering~\cite{Ferguson:2021:IFR} to ensure robust convergence.
	However, constructing the global Hessian at each Newton iteration is computationally expensive on both multi-core CPUs and GPGPUs, particularly when the number of bonds significantly exceeds the number of affine bodies~\cite{Lan:2022:ABD}.
	To address this challenge, we develop a Hessian-free substitution-type algorithm, following the approach proposed by He et al.~\shortcite{He:2025:Semi}.
	The core idea is to separate the potential force for each bond $-{{\partial \Psi _{\xi}} \mathord{\left/ {\vphantom {{\partial \Psi _{\xi} } {\partial \mathbf{y}_{i\xi}}}} \right. \kern-\nulldelimiterspace} {\partial \mathbf{q}_{i}}}$ into the summation of a positive and a negative part
	\begin{equation}
		-\frac{{\partial \Psi_{\xi} }}{{\partial \mathbf{q}_{i}}} = \frac{{{\varphi_{\xi} ^ + }\left( d \right)}}{d}{\mathbf{J}_{i\xi }^T\left( {{\mathbf{J}_{j\xi}}{\mathbf{q}_{j}} - {\mathbf{J}_{i\xi }}{\mathbf{q}_i}} \right)} + \frac{{{\varphi_{\xi} ^ - }\left( d \right)}}{d}{\mathbf{J}_{i\xi }^T\left( {{\mathbf{J}_{j\xi}}{\mathbf{q}_{j}} - {\mathbf{J}_{i\xi }}{\mathbf{q}_i}} \right)},
		\label{eq:separate}
	\end{equation}
	where ${\varphi_{\xi} ^ + }$ indicates $\varphi_{\xi} ^ + \geq 0$ for any $d$ while $\varphi_{\xi} ^ - \leq 0$.
	%Additionally, $\dot{\Psi} ^ +$ signifies the positive portion of $\dot{\Psi}$, i.e., $\dot{\Psi}(d) > 0$ for $d \in (0, \infty)$, while $\dot{\Psi} ^ -$ denotes the negative portion of $\dot{\Psi}$.
	At $k$-th substitution, we treat the positive part implicitly while the negative part explicitly, therefore the potential force can be linearized as
	\begin{equation}
		\begin{array}{l}
			\begin{aligned}
				-\left(\frac{{\partial \Psi_{\xi} }}{{\partial \mathbf{q}_{i}}}\right)^{k+\frac{1}{2}} &= \frac{{{\varphi_{\xi} ^ + }\left( d^{k} \right)}}{d^{k}_\xi}{\mathbf{J}_{i\xi }^T\left( {{\mathbf{J}_{j\xi}}{\mathbf{q}_{j}^{k+\frac{1}{2}}} - {\mathbf{J}_{i\xi }}{\mathbf{q}_i^{k+\frac{1}{2}}}} \right)} \\
				&+ \frac{{{\varphi_{\xi} ^ - }\left( d^k \right)}}{d^{k}}{\mathbf{J}_{i\xi }^T\left( {{\mathbf{J}_{j\xi}}{\mathbf{q}_{j}^{k}} - {\mathbf{J}_{i\xi }}{\mathbf{q}_i^{k}}} \right)}.
			\end{aligned}
		\end{array}
		\label{eq:semi}
	\end{equation}
	Inserting Eq.~\eqref{eq:semi} into the secant equation of the global energy function, we are able to reformulate Eq.~\ref{eq:firstorder} as
	\begin{equation}
		\begin{array}{l}
			\begin{aligned}
				\mathbf{M}_i\left( {\mathbf{q}_i^{k+\frac{1}{2}} - \mathbf{s}_i^n} \right) &= {h^2}\sum\limits_{\xi}{\frac{{\varphi_\xi ^+ }\left( {{d^k }} \right)}{d^k}{\mathbf{J}_{i\xi }^T\left( {{\mathbf{J}_{j\xi}}{\mathbf{q}^{k+\frac{1}{2}}_{j}} - {\mathbf{J}_{i\xi }}{\mathbf{q}^{k+\frac{1}{2}}_i}} \right)}}    \\
				&+ {h^2}\sum\limits_{\xi}{\frac{{\varphi_\xi^- }\left( {{d^k }} \right)}{d^k}{\mathbf{J}_{i\xi }^T\left( {{\mathbf{J}_{j\xi}}{\mathbf{q}^{k}_{j}} - {\mathbf{J}_{i\xi }}{\mathbf{q}^{k}_i}} \right)}}
			\end{aligned}
		\end{array},
		\label{eq:semi_govern}
	\end{equation}
	which yields a square system of $12N$ linear equations.
	Analogous to the Jacobi iteration, by decomposing the coefficient matrix into its diagonal and off-diagonal components, the body-wise update for each affine body $i$ is given by:
	\begin{equation}
		\begin{array}{l}
			\begin{aligned}
				{\mathbf{q}}_i^{k + \frac{1}{2}} &= {\left({\mathbf{H}_i^k}\right)^{ - 1}}\left( {{\mathbf{M}_i\mathbf{s}}_i^n + {h^2}\sum\limits_\xi  {\varphi _\xi ^ - \left( {{d^k}} \right){\mathbf{J}}_{i\xi }^T{\mathbf{n}}_\xi ^k} } \right.\\
				&{\kern 62pt}\left. { +{\kern 2pt} {h^2}\sum\limits_\xi  {\frac{{\varphi _\xi ^ + \left( {{d^k}} \right)}}{d^k}{\mathbf{J}}_{i\xi }^T{{\mathbf{y}}^k_{j\xi }}} } \right),
			\end{aligned}
		\end{array}
		\label{eq:qglobal}
	\end{equation}
	where ${\mathbf{H}_i^k}$ denotes a symmetric, positive-definite $12 \times 12$ matrix as
	\begin{equation}
		{\mathbf{H}_i} = {\mathbf{M}_i} + {h^2}\sum\limits_\xi  {\frac{{\varphi _\xi ^ + \left( {{d}} \right)}}{{{d}}}\mathbf{J}_{i\xi }^T{\mathbf{J}_{i\xi }}},
		\label{eq:H}
	\end{equation}
	${\mathbf{n}}_\xi$ denotes the normal vector pointing from $\mathbf{y}_{i\xi}$ to $\mathbf{y}_{j\xi}$ 
	\begin{equation}
		{\mathbf{n}_\xi } = \frac{{\left( {{\mathbf{J}_{j\xi }}{\mathbf{q}_j} - {\mathbf{J}_{i\xi }}{\mathbf{q}_i}} \right)}}{d}.
	\end{equation}
	It is important to note that Eq.~\ref{eq:qglobal} now conforms to a fixed-point iteration framework, where each update involves only a separable \highlight{$12 \times 12$} matrix for each affine body $i$.
	This property enables our algorithm to be fully compatible with modern GPU architectures, as each affine body can be updated independently.

	\subsubsection{Local Solve}
	In the local step, the purpose is to rigidify each affine body by solving the following constrained optimization problem
	\begin{equation}
		{\mathbf{q}}^{k+1} = \arg \mathop {\min }\limits_\mathbf{p} \frac{1}{2}\left\| {{{\mathbf{q}}_i^{k + \frac{1}{2}}} - \mathbf{p}} \right\|^2_F,{\kern 5pt}s.t.{\kern 2pt}{V_ \bot }(\mathbf{p}) = 0.
	\end{equation}
	Since the translational component of ${{\mathbf{q}}_i^{k + \frac{1}{2}}}$ is independent of the linear transformation and is unconstrained, the translation can be directly \highlight{set} as ${\mathbf{c}}^{k+1}_i = {\mathbf{c}}^{k+\frac{1}{2}}_i$.
	To solve the rotational component, the constrained optimization problem can \highlight{possibly} be reformulated as an unconstrained problem given by:
	\begin{equation}
		Q(\mathbf{A}_i :\kappa) = {\frac{1}{2} {{\left\| {\mathbf{A}_i - \mathbf{A}^{k+\frac{1}{2}}_i} \right\|}_F^2}}  + \kappa \left\| {\mathbf{A}_i{\mathbf{A}^T_i} - \mathbf{I}} \right\|_F^2,
	\end{equation}
	where $\kappa$ can also be regarded as the Lagrange multiplier.
	As the value of $\kappa$ increases, the affine body is expected to become progressively stiffer. However, it is important to note that the optimization problem remains ill-conditioned when $\kappa$ approaches extremely large values to simulate perfectly rigid bodies.
	%\begin{figure}[t]
	%  \centering
	%  \subfigure[World space]{\includegraphics[width=0.9\linewidth]{images/dangling/dangling_bad.pdf}}
	%  \subfigure[Body space]{\includegraphics[width=0.9\linewidth]{images/dangling/dangling.pdf}}
	%  \vspace{-0.12in}
	%  \caption{(a) A direct solve in the world space renders the simulation highly unstable; (b) Within the body space, the simulation is stable and accurate.}
	%  \label{fig:stabilizer}
	%  \vspace{-0.18in}
	%\end{figure}
	Consider, for instance, the special case where $\kappa = \infty$. In this scenario, minimizing $Q(\mathbf{A}_i : \kappa)$ is equivalent to minimizing $\left\| \mathbf{A}_i \mathbf{A}_i^T - \mathbf{I} \right\|_F^2$. However, since $\left\| \mathbf{A}_i \mathbf{A}_i^T - \mathbf{I} \right\|_F^2$ admits an infinite number of local minimizers, identifying a unique minimizer for $Q(\mathbf{A}_i : \kappa)$ becomes infeasible, which may lead to instability in the simulation.
	To address this issue, we apply a polar decomposition to $\mathbf{A}^{k+\frac{1}{2}}_i = \mathbf{R}_i \mathbf{S}_i$, which is a rotation $\mathbf{R}_i$ followed by positive semi-definite symmetric matrix $\mathbf{S}_i$, and directly set $\mathbf{A}^{k+1}_i = \mathbf{R}_i$.
	The resulting solution is subsequently substituted into Eq.~\ref{eq:qglobal} for the next global computation.
	\vspace{-0.24in}
	\subsection{A sparse representation for $\mathbf{H}_i$}
	One issue that remains unsolved is how to efficiently calculate the inverse of the $12 \times 12$ matrix $\mathbf{H}_i$ on GPU for each global step.
	According to GPU specifics, the total number of registers per Streaming Multiprocessor (SM) is usually limited and many modern GPUs have a hard limit of 255 registers per thread~\cite{GPU2026}.
	Therefore, a dense representation for $\mathbf{H}_i$ can not only quickly consume up the limited number of registers for each GPU thread, but also incur a lot of inefficiency in calculating its inverse. 
	For example, directly calculating the inverse of a $12 \times 12$ matrix using LU decomposition may require at least 144 registers.
	
	To address this problem, let us revisit the structure of $\textbf{J}$ in its matrix form
	\begin{equation}\textbf{J} = \left[ {\begin{array}{*{20}{c}}
				1&0&0&{{x}}&{{y}}&{{z}}&0&0&0&0&0&0\\
				0&1&0&0&0&0&{{x}}&{{y}}&{{z}}&0&0&0\\
				0&0&1&0&0&0&0&0&0&{{x}}&{{y}}&{{z}}
		\end{array}} \right].
	\end{equation}
	By exploiting its sparse nature, the $12 \times 12$ matrix $\mathbf{H}_i$ can be simplified to 
	\begin{equation}
		{\mathbf{H}_i} = \left[ {\begin{array}{*{20}{c}}
				\begin{aligned}
					{{a_i}\mathbf{I}}{\kern 5pt}&&{{\mathbf{e}_x}\mathbf{b}_i^T}&&{{\mathbf{e}_y}\mathbf{b}_i^T}&&{{\mathbf{e}_z}\mathbf{b}_i^T}\\
					{{\mathbf{b}_i}\mathbf{e}_x^T}&&{{\mathbf{D}_i}}{\kern 5pt}&&0{\kern 8pt}&&0{\kern 8pt}\\
					{{\mathbf{b}_i}\mathbf{e}_y^T}&&0{\kern 8pt}&&{{\mathbf{D}_i}}{\kern 5pt}&&0{\kern 8pt}\\
					{{\mathbf{b}_i}\mathbf{e}_z^T}&&0{\kern 8pt}&&0{\kern 8pt}&&{{\mathbf{D}_i}{\kern 5pt}}
				\end{aligned}
		\end{array}} \right],
	\end{equation}
	where $\mathbf{e}_x$, $\mathbf{e}_y$ and $\mathbf{e}_z$ are the unit vectors along the x-, y- and z-axis,
	$a_i$, $\mathbf{b}_i$ and $\mathbf{D}_i$ represent a $1\times1$ scalar, a $3\times1$ vector and a $3\times3$ matrix, whose formula are written as follows
	
	\[\begin{array}{l}
		\begin{aligned}
			{a_i} &= {m_i} + {h^2}\sum\limits_\xi  {\frac{{\varphi _{i\xi }^ + \left( d \right)}}{d}}, {\kern 10pt} {\mathbf{b}_i} = {h^2}\sum\limits_\xi  {\frac{{\varphi _{i\xi }^ + \left( d \right)}}{d}{\mathbf{x}_{i\xi }}}, \\
			{\mathbf{D}_i} &= {\Pi _i} + {h^2}\sum\limits_\xi  {\frac{{\varphi _{i\xi }^ + \left( d \right)}}{d}{\mathbf{x}_{i\xi }}\mathbf{x}_{i\xi }^T}.
		\end{aligned}
	\end{array}\]
	Thus, it can be noted that we only need to store $a_i$, $\mathbf{b}_i$ and $\mathbf{D}_i$ rather than the whole matrix, which helps reduce the required register number from 144 to 13.
	More importantly, the above formulation also greatly simplifies the procedure in calculating the inverse of $\mathbf{H}_i$.
	By invoking the following relationship~\cite{Horn:2012:Matrix}
	\[{\left[ {\begin{array}{*{20}{c}}
				\mathbf{I}&\mathbf{B}\\
				{\mathbf{C}}&\mathbf{D}
		\end{array}} \right]^{ - 1}} = \left[ {\begin{array}{*{20}{c}}
			{\mathbf{I} + \mathbf{B}{{\left( {\mathbf{D} - {\mathbf{C}}\mathbf{B}} \right)}^{ - 1}}{\mathbf{C}}}&{ - \mathbf{B}{{\left( {\mathbf{D} - {\mathbf{C}}\mathbf{B}} \right)}^{ - 1}}}\\
			{ - {{\left( {\mathbf{D} - {\mathbf{C}}\mathbf{B}} \right)}^{ - 1}}{\mathbf{C}}}&{{{\left( {\mathbf{D} - {\mathbf{C}}\mathbf{B}} \right)}^{ - 1}}}
	\end{array}} \right]\]
	where $\mathbf{I}$ can be any identity matrix while $\mathbf{B} = \mathbf{C}^T$, and $\mathbf{D}$ represent another square matrix.
	After some simple algebraic operations, the inverse of $\mathbf{H}_i$ has the same structure as $\mathbf{H}_i$, which is written as
	\begin{equation}
		{\kern -5pt}\mathbf{H}_i^{ - 1} = \left[ {\begin{array}{*{20}{c}}
				\begin{aligned}
					{\left( {a_i^{ - 1} + \tilde{a}_i} \right)\mathbf{I}}&&{ - {\mathbf{e}_x}\tilde{\mathbf{b}}_i^T\tilde{\mathbf{D}}_i^{ - 1}}&&{ - {\mathbf{e}_y}\tilde{\mathbf{b}}_i^T\tilde{\mathbf{D}}_i^{ - 1}}&&{ - {\mathbf{e}_z}\tilde{\mathbf{b}}_i^T\tilde{\mathbf{D}}_i^{ - 1}}\\
					{ - \tilde{\mathbf{D}}_i^{ - 1}{{\tilde{\mathbf{b}}}_i}\mathbf{e}_x^T}&&{\tilde{\mathbf{D}}_i^{ - 1}}{\kern 10pt}&&0{\kern 15pt}&&0{\kern 15pt}\\
					{ - \tilde{\mathbf{D}}_i^{ - 1}{{\tilde{\mathbf{b}}}_i}\mathbf{e}_y^T}&&0{\kern 15pt}&&{\tilde{\mathbf{D}}_i^{ - 1}}{\kern 10pt}&&0{\kern 15pt}\\
					{ - \tilde{\mathbf{D}}_i^{ - 1}{{\tilde{\mathbf{b}}}_i}\mathbf{e}_z^T}&&0{\kern 15pt}&&0{\kern 15pt}&&{\tilde{\mathbf{D}}_i^{ - 1}{\kern 10pt}}
				\end{aligned}
		\end{array}} \right],
	\end{equation}
	where $\tilde{a}_i$, $\tilde{\mathbf{b}}_i$ and $\tilde{\mathbf{D}}_i$ are calculated as follows
	\[ \tilde{a}_i =\tilde{\mathbf{b}}_i^T\tilde{\mathbf{D}}_i^{-1}{\tilde{\mathbf{b}}_i}, {\kern 10pt} {\tilde{\mathbf{b}}_i} = \frac{{{\mathbf{b}_i}}}{{{a_i}}},{\kern 10pt}{\tilde{\mathbf{D}}_i} = {\mathbf{D}_i} - \frac{{{\mathbf{b}_i}\mathbf{b}_i^T}}{{{a_i}}}.\]
	Notice the above formulation of $\mathbf{H}_i$ greatly simplifies the inverse calculation.
	Furthermore, it is GPU-friendly due to a reduction of the required number of registers.
	
	\subsection{Update translational and angular velocities of rigid bodies}
	%Similar to updating rigid bodies, we have to handle both the translational and rotational degrees of freedom for affine bodies.
	%According to Lemma~\ref{eq:dA}, the trajectory of an affine body can be updated as
	%\begin{equation}
	%\dot{\mathbf{q}} = {\left( {{\mathbf{v}^T},({[\omega ]_ \times } \mathbf{a}_1)^T,({[\omega ]_ \times } \mathbf{a}_2)^T,({[\omega ]_ \times } \mathbf{a}_3)^T} \right)^T}.
	%\end{equation}
	Given the solution to the optimization problem, updating the translational velocities of rigid bodies is straightforward, as was typically done in position-based dynamics~\cite{Muller:2007:Position}. 
	To compute the angular velocities of rigid bodies, Lemma~\ref{eq:dA} can be applied. 
	Reformulating Lemma~\ref{eq:dA} to $[\omega ]_ \times = \dot{\mathbf{A}} \mathbf{A}^{-1}$ allows the angular velocity components $\omega _x$, $\omega _y$ and $\omega _z$ o be directly derived from the off-diagonal entries of $[\omega ]_ \times$.
	\highlight{To further reduce floating point errors, the final value of $\omega$ is obtained from the average of $[\omega ]_ \times$ and $-[\omega ]^T_ \times$, i.e., $\left( {{{[\omega ]}_ \times } - [\omega ]_ \times ^T} \right)/2$.}
	Algorithm~\ref{eq:alg} summarizes the complete procedure of our method.\highlight{The rotational component of $\mathbf{q}$ is obtained from the quaternion using the standard quaternion-to-matrix conversion~\cite{Shoemake:1985:Quaternion}.}

	%We propose to find the minimizer of $Q(\mathbf{q}_i :\kappa)$ in two passes as shown in Fig.~\ref{fig:twopass}.
	%In the first pass, we keep $\mathbf{p}_i$ as a constant, and a Newton iteration is taken to update the affine matrix only.
	%It is noteworthy that only the inverse of a $9\times9$ Hessian matrix needs to be calculated.
	%Moreover, to mitigate the influence of the initial value of the affine matrix on the convergence rate, we rotate all variables in $Q(\mathbf{q}_i :\kappa)$ into a reference frame. 
	%Subsequently, we update $\mathrm{A}_i$ within the reference frame.
	%To determine $\mathrm{A}_i$ uniquely, we impose an additional condition that $\mathrm{A}^{k+1}_i$ is closest to $\mathrm{A}^{k}_i$ among all affine matrices satisfying the  secant equation of $Q(\mathbf{q}_i :\kappa)$.
	%Besides, we use $\mathbf{y}'_{i\xi}$ instead of $\mathbf{y}_{i\xi}$ to avoid transitional update in this pass.
	%Finally, we rotate $\mathrm{A}_i$ back to the original frame.
	%Then in the second pass, we keep $\mathrm{A}_i$ as a constant, and $\mathbf{p}_i$ is updated with another Newton iteration.
	%Figure~\ref{fig:stabilizer}(bottom) illustrates the accurate rotation and translation achieved with our two-pass strategy.
	
	\begin{algorithm}[t]
		\caption{Projective Affine Body Dynamics}
		\label{alg:pabd}  
		\begin{algorithmic}[1]
			\While {simulating}
			\State Do collision detection;
			\For {affine body $i$}
			\State $\mathbf{v}_i^* \leftarrow \mathbf{v}_i^n + h{\mathbf{f}_{ext}}/{m_i}$;
			\State $\mathbf{c}_i^* \leftarrow \mathbf{c}_i^n + h\mathbf{v}_i^*$;
			\State $\mathbf{Q}_i^* \leftarrow \mathbf{Q}_i^n + \frac{h}{2}\left[ {{\omega _x},{\omega _y},{\omega _z},0} \right]\mathbf{Q}_i^n$;
			\State $\mathbf{q}^*_i \leftarrow (\mathbf{c}_i^*, \mathbf{Q}_i^*)$;
			\State $\mathbf{q}^{k=0}_i \leftarrow \mathbf{q}^*_i$
			\EndFor
			\For {$k < n_{max}$}
			\For {affine body $i$}
			\State $\mathbf{q}_i^{k + \frac{1}{2}} \leftarrow$ solve Equation~\ref{eq:qglobal};       {\kern 13pt}//Global solve
			\State $\mathbf{R}_i \leftarrow {\mathbf{A}}^{k+\frac{1}{2}}_i$;   {\kern 67pt}//Local solve
			\State $\mathbf{q}_i^{k + 1} \leftarrow (\mathbf{c}_i^{k + \frac{1}{2}}, {\mathbf{R}}_i)$;
			\EndFor
			\EndFor
			\For {affine body $i$}
			\State $\mathbf{q}^{n+1}_i \leftarrow \mathbf{q}^{n_{max}}_i$;
			\State $\mathbf{v}_i^{n + 1} = \left( {\mathbf{c}_i^{n + 1} - \mathbf{c}_i^n} \right)/h$;
			\State $\mathbf{\omega} _i^{n + 1} \leftarrow \left( {\mathbf{A}_i^{n + 1} - \mathbf{A}_i^n} \right){\left( {\mathbf{A}_i^n} \right)^{ - 1}}/h$;
			\EndFor
			\EndWhile
		\end{algorithmic}
		\label{eq:alg}
	\end{algorithm} 
	\vspace{-0.06in}
	
	%\begin{algorithm}[t]
	%\caption{Projective Affine Body Dynamics}\label{alg}
	%\While{$t < t_{stop}$}{
		%    Collision detection \;
		%    \While{$k < N$}{
			%        
			%        }
		%}
	%\label{alg:PABD}
	%\end{algorithm}
	
	\section{Results and Discussion}
	All our algorithms involved in this section are implemented using C++ and CUDA, and the execution of all examples is carried out on an NVIDIA RTX 3080 GPU. 
	Collision detection is accelerated with linear BVH~\cite{karras2012maximizing}.
	Coulomb friction is implemented based on~\cite{Andrews:2022:Contact}.
	Table 2 shows statistics for major examples.

	\vspace{-0.06in}
	\subsection{Unit Tests}
	To support multibody dynamics with contact, we follow Macklin et al.~\shortcite{Macklin:2019:NNM} to implement both the soft and hard constraint functions, and then use a different combination of constraints to model different kinds of joints.
	The details on the mathematical description on both the constraint functions and joints are given in the supplementary material for completeness.
	Please also refer to the video to view the demonstration of three basic joints modeled with our method, include the spherical joint, hinge joint and slider joint.
	
	\paragraph{\textbf{Joints}}
	Figure~\ref{fig:linkage} illustrates the method's capability to simulate complex mechanisms involving various types of joints, contact, and friction.
	Figure~\ref{fig:linkage}(top) illustrates a mechanism composed of four rigid bodies that are connected together by three hinge joints and a slider joint.
	Figure~\ref{fig:linkage}(bottom) illustrates a mechanism composed of a driven gear (yellow) and a free gear (grey) that are connected together by a sliding linkage.
	%Figure~\ref{fig:linkage}(c) illustrates a mechanism composed of two gears that are connected by three linkages, both gears are also connected to the fixed structure through hinge joints and the middle linkage is connected to the fixed structure with a slider joint.
	The collision between gears is modeled using a compound shape, with a capsule positioned at the front of each gear.

	\begin{figure}[t]
		\centering
		\includegraphics[width=1.0\linewidth]{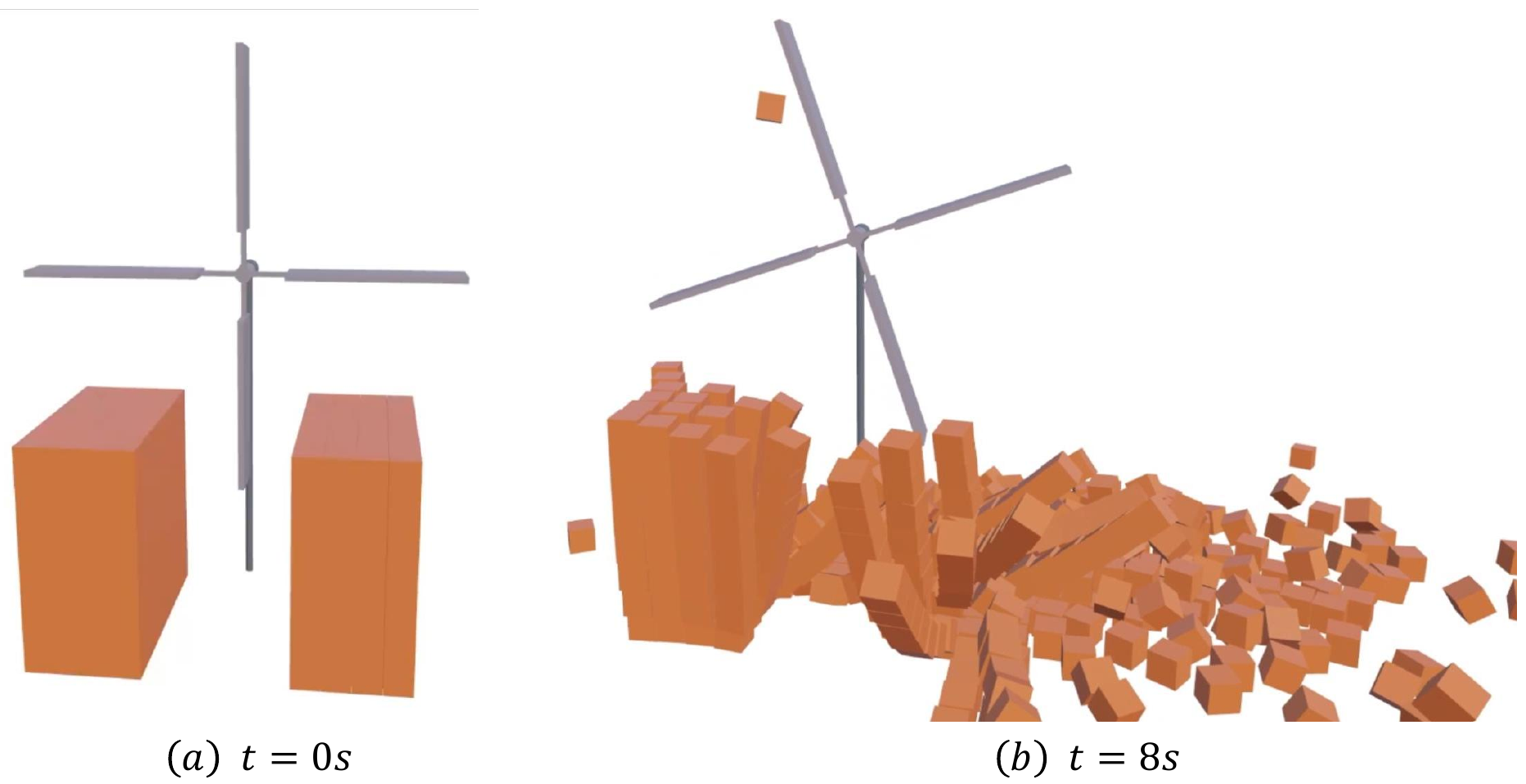}
		\vspace{-0.12in}
		\caption{Simulation of a windmill colliding two piles of boxes. }
		\vspace{-0.12in}
		\label{fig:windmill}
	\end{figure}
	\paragraph{\textbf{Contact and Friction}} Figure~\ref{fig:windmill} \highlight{shows} the ability of our method in simulating a windmill colliding two piles of boxes.
	The windmill is composed of a fixed stand, four wheel blades connected to the fixed stand with driven hinge joint.
	Notice our method can stably solve contact and friction for large number of rigid bodies.
	Compared to the Newton barrier method~\cite{Chen:2022:UNB}, our method achieves a speedup of more than one order.
	\vspace{-0.18in}
	\subsection{Comparison to other methods}
	\subsubsection{\textbf{Comparison to the sequential impulse method}}
	
	%\vspace{-0.06in}
	%\paragraph{Comparison with Bullet} Figure~\ref{fig:bullet} presents a comparison between the simulation of a stack of boxes using Bullet library~\cite{Coumans:2015:Bullet} and our proposed method. 
	%Our approach demonstrates speedups of up to three orders of magnitude, particularly evident when dealing with larger box sizes, attributable to the substantial inherent parallelism in our method.
	
	Figure~\ref{fig:ballhit} presents a comparison between our method and the sequential impulse method~\cite{Catto:2005:Iterative} implemented in PhysX under identical simulation conditions.
	Both methods yield comparable results and accurately capture contact and frictional interactions.
	However, because our method has not yet been fully optimized, it exhibits slightly lower computational performance: it runs at approximately 40 fps, whereas the sequential impulse method achieves approximately 70 fps.
	Despite this, the sequential impulse method encounters difficulties when dealing with extreme mass ratios. 
	As shown in Figure~\ref{fig:largemass}, which depicts a stack of six boxes where each box has ten times the mass of the one below it, the sequential impulse method results in an unstable simulation. 
	In contrast, our method handles this scenario stably.
	
	\begin{figure}[t]
		\centering
		\subfigure[Sequential impulse]{\includegraphics[width=1.0\linewidth]{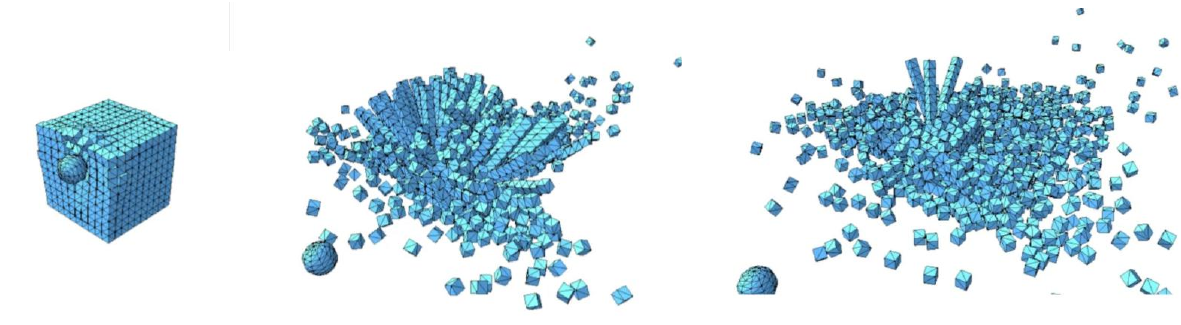}}
		\subfigure[Our method]{\includegraphics[width=1.0\linewidth]{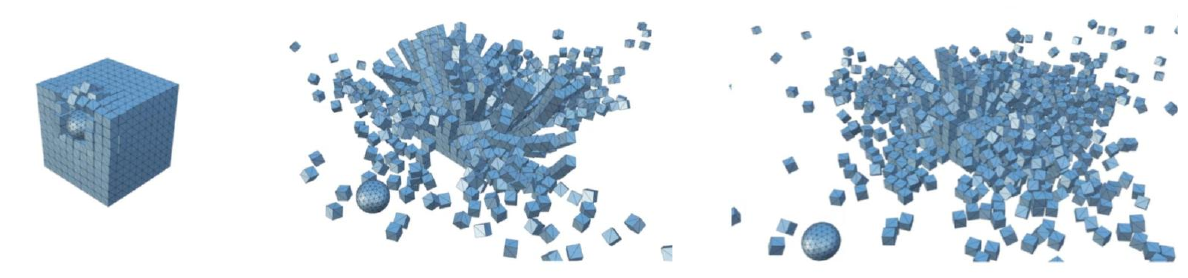}}
		\vspace{-0.12in}
		\caption{A ball hitting a stack of boxes. Our method is able to generate consistent results compared to the sequence impulse method. }
		\label{fig:ballhit}
	\end{figure}
	
	\begin{figure}[t]
		\centering
		\subfigure[Sequential impulse]{\includegraphics[width=0.45\linewidth]{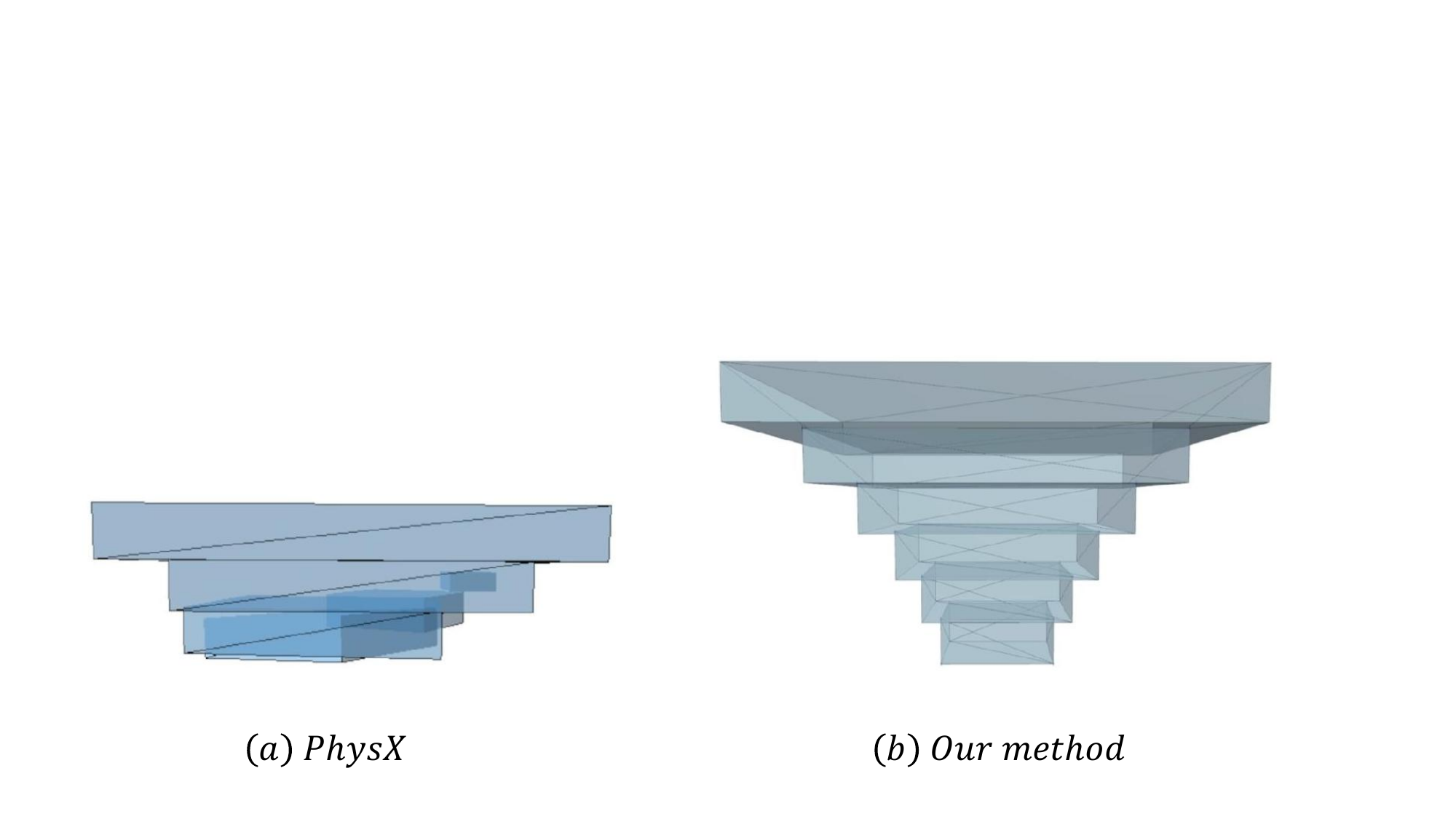}}
		\subfigure[Our method]{\includegraphics[width=0.45\linewidth]{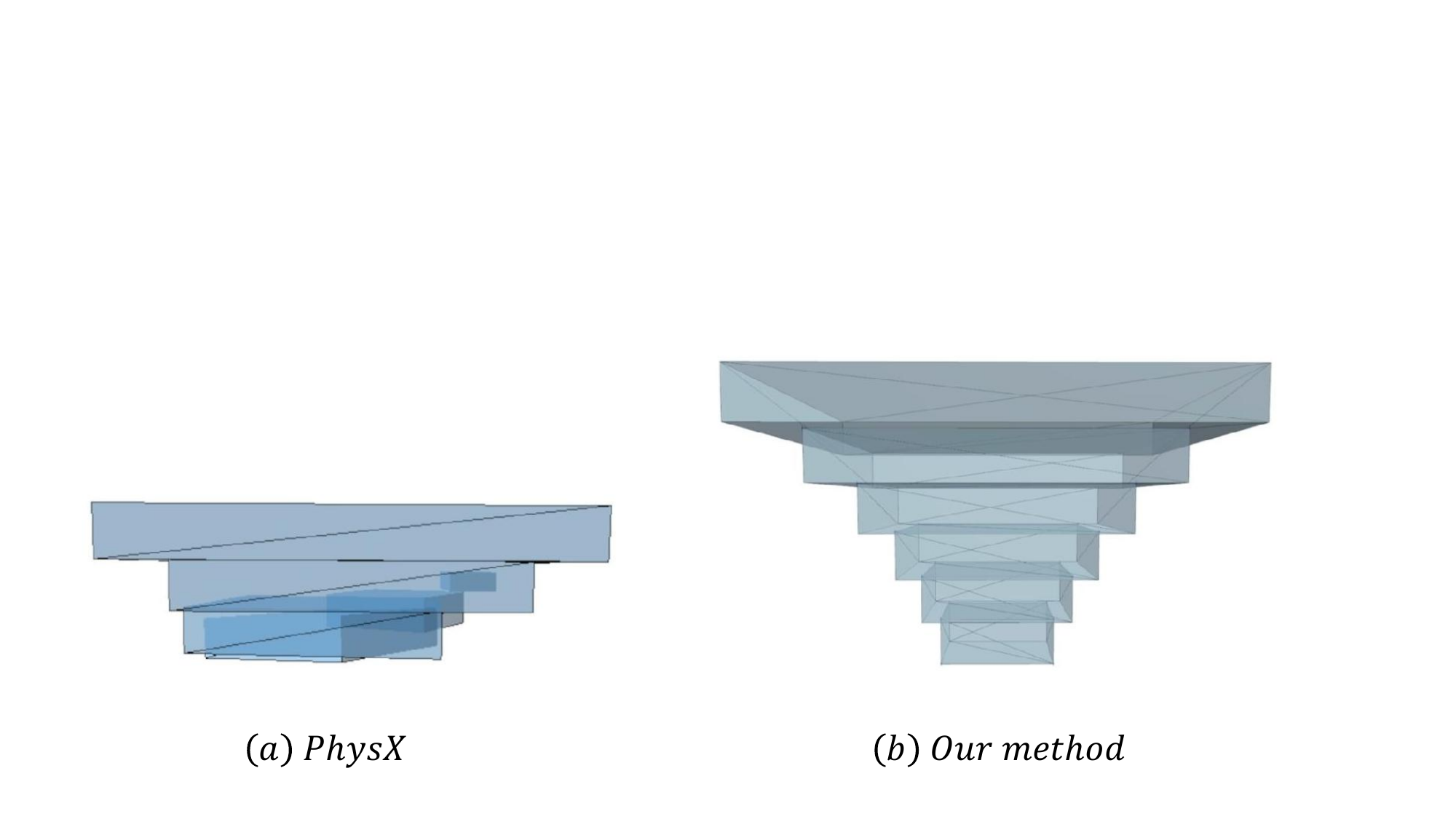}}
		\vspace{-0.12in}
		\caption{Simulation of a stack of boxes where each box has ten times the mass of the one below it. PhysX results in an unstable simulation while our method handles this scenario stably.}
		\vspace{-0.30in}
		\label{fig:largemass}
	\end{figure}
	\vspace{-0.1in}
	\subsubsection{\textbf{Comparison to the AVBD method}}
	Figure~\ref{fig:spring} first demonstrates a simple example using a spring connecting 2 boxes.
	\highlight{To be fair}, both methods apply the following quadratic energy potential
	\begin{equation}
		{\Psi _j} = \frac{1}{2}{\kappa_j}{\left( {{C_j}\left( x \right)} \right)^2},
		\label{eq:quadratic}
	\end{equation}
	where $\kappa_j$ represents the stiffness, $C_j = | d - l |$ describes the elongation of the constraint which has a rest length of $l$.
	Its semi-implicit split of the first order derivative is given in Eq.15 by He et al.~\shortcite{He:2025:Semi}.
	When the spring is initially stretched to two times of the rest length, the convergence rate for the first time step is plotted in Figure~\ref{fig:spring}(a).
	Note that AVBD can obtain the exact solution in a single iteration due to the quadratic convergence of \highlight{Newton's} method. 
	In contrast, the semi-implicit successive substitution method exhibits only linear convergence; consequently, our method requires multiple iterations to achieve a sufficiently accurate solution. \highlight{Figure~\ref{fig:spring}(b) uses the scenario in Fig.~\ref{fig:net}, which is composed of an $11\times11$ grid of boxes connected by 220 springs, where each box is connected to its four nearest neighbors, and all springs are initially stretched toward the center-bottom of the scene. Since the variable stiffness introduced by the augmented Lagrangian strategy in AVBD would increase the total potential energy, we keep all spring stiffnesses in both our method and AVBD identical. As a result, AVBD does not exploit any adaptivity and effectively reduces to standard VBD in this experiment. As can be seen from the figure, our method exhibits slightly faster convergence than VBD during the first few iterations, while both methods eventually converge to similar results.}
	
	Figure~\ref{fig:chain} further presents a comparison between Augmented Projective Affine Body Dynamics (Augmented-PABD) and Augmented Vertex Block Descent (AVBD) in enforcing hard constraints using the following augmented Lagrangian formulation
	\begin{equation}
		{\Psi _j} = \frac{1}{2}{\kappa_j}{\left( {{C_j}\left( x \right)} \right)^2} + \lambda _j {{C_j}\left( x \right)},
		\label{eq:augmented}
	\end{equation}
	where $\lambda$ is the Lagrange multiplier.
	At the beginning of the simulation, we initialize the variables for both methods using $\kappa^0_j = \kappa_{start}$ and $\lambda_j = 0$, where $\kappa_{start} = 10000$.
	In subsequent iterations, the primal and dual variables are updated following the augmented Lagrangian method in AVBD~\cite{Giles:2025:AVBD}
	\begin{equation}
		\begin{array}{l}
			\begin{aligned}
				\kappa_j^{k + 1} &= \kappa_j^k + \beta \left| {C_j^k\left( x \right)} \right|\\
				\lambda _j^{k + 1} &= \kappa_j^k C_j^k\left( x \right) + \lambda _j^k
			\end{aligned}
		\end{array},
	\end{equation}
	where a scaling parameter $\beta$ is set to a constant of $10000$. 
	Meanwhile, we apply the warm starting strategy for both the stiffness and dual variables at every simulation step
	\begin{equation}
		\kappa_j^0 = \max \left( {\gamma {\kern 1pt} \kappa_j^n,{\kappa_{start}}} \right){\kern 10pt}and{\kern 10pt}\lambda _j^0 = \alpha {\kern 1pt} \gamma {\kern 1pt} \lambda _j^t,
	\end{equation}
	where $\alpha = 0.99$ and $\gamma = 0.999$.
	According to the comparison, it can be noticed our method is slightly better in enforcing hard constraints.
	
	Finally, Figure~\ref{fig:avbd}(a) simulates a chain of boxes with a large mass ratio using our method, both with and without the augmented Lagrangian formulation in AVBD~\cite{Giles:2025:AVBD}.
	When the augmented Lagrangian formulation is not applied, the distance constraints are not well preserved due to the large mass of the bottom box.
	However, our method produces results comparable to those of AVBD when the augmented Lagrangian formulation is applied.
	
	\begin{figure}[t]
		\centering
		\subfigure[]{\includegraphics[width=0.44\linewidth]{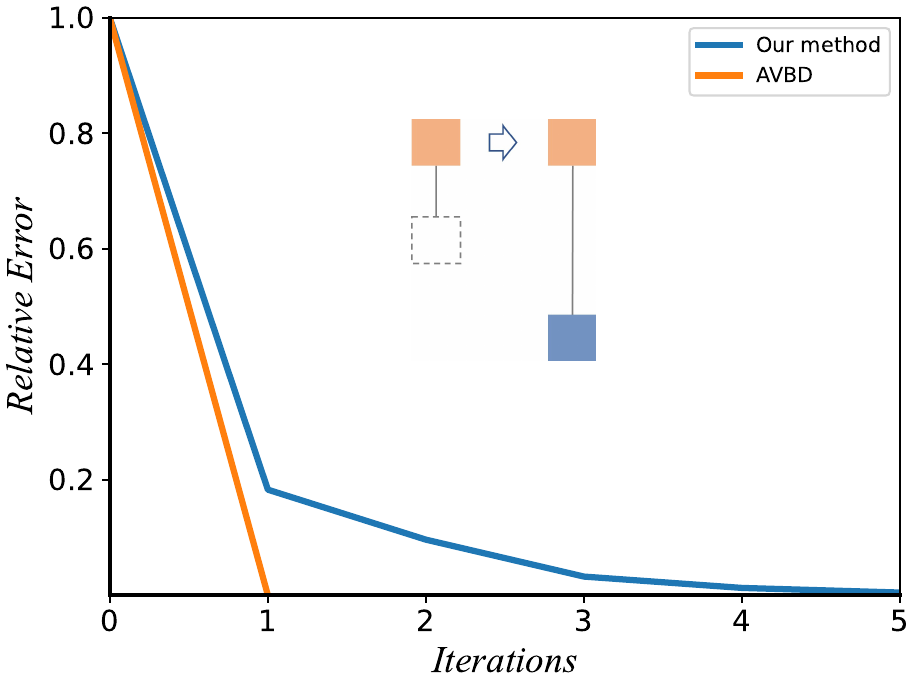}}
		\subfigure[]{\includegraphics[width=0.49\linewidth]{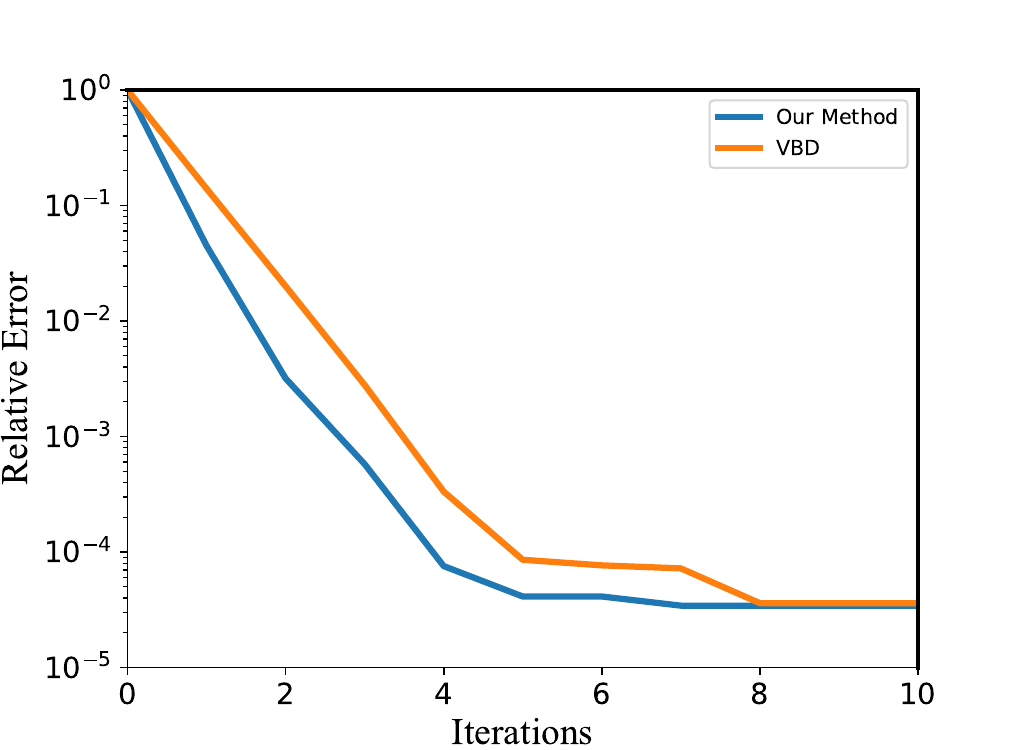}}
		\caption{Comparison of convergence between our method and AVBD. (a) When only one rigid box is simulated, the convergence rate of AVBD is faster than our method. \highlight{(b) In the scenario shown in Fig.~\ref{fig:net}, our method exhibits slightly faster convergence in the early iterations, but the convergence behavior eventually becomes similar. The relative error is defined in the same way as in \cite{Bouaziz:2014:PD}.}}
		\vspace{-0.12in}
		\label{fig:spring}
	\end{figure}

	\begin{figure}[t]
		\centering
		\includegraphics[width=1.0\linewidth]{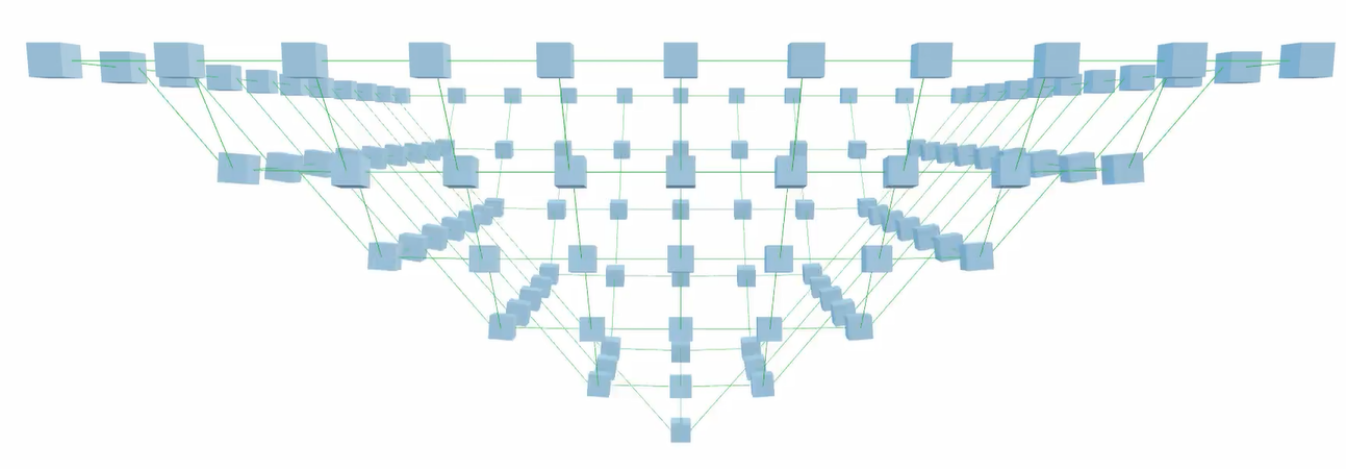}
		\vspace{-0.15in}
		\caption{\highlight{This scenario consists of an $11\times11$ grid of boxes, where each box is connected to its four neighboring boxes by springs, forming 220 spring constraints in total. Initially, the springs connecting each box to its four neighbors are pre-stretched toward the middle-bottom of the scene.}}
		\vspace{-0.15in}
		\label{fig:net}
	\end{figure}
	
	\begin{figure}[t]
		\centering
		\includegraphics[width=1.0\linewidth]{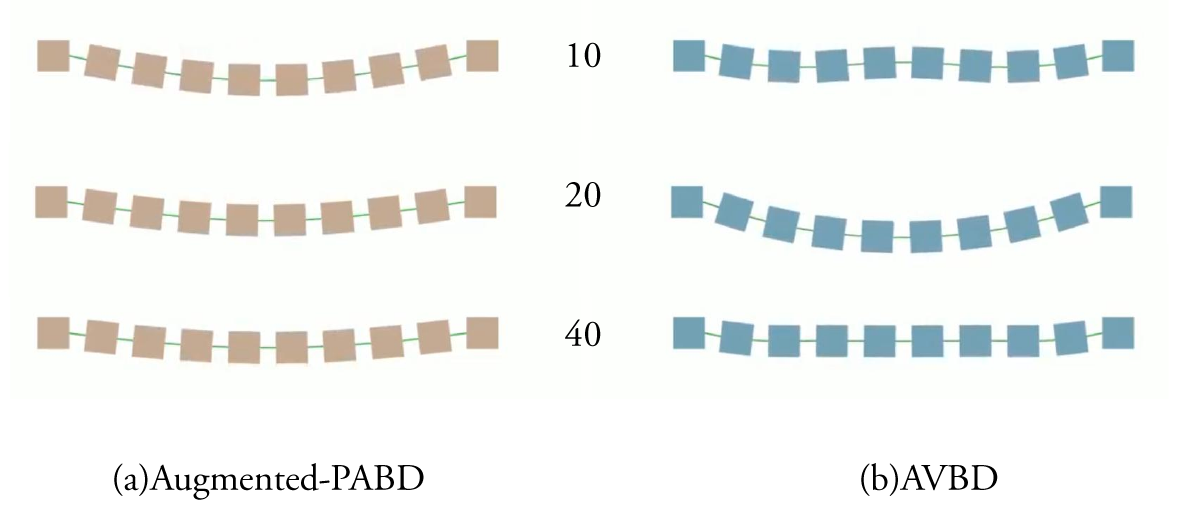}
		\vspace{-0.15in}
		\caption{A chain of boxes. In simulating a chain of rigid boxes with hard constraints, Augmented-PABD performs slightly better than AVBD in settling down the simulation.}
		\vspace{-0.26in}
		\label{fig:chain}
	\end{figure}
	
	\begin{figure}[h]
		\centering
		\includegraphics[width=1.0\linewidth]{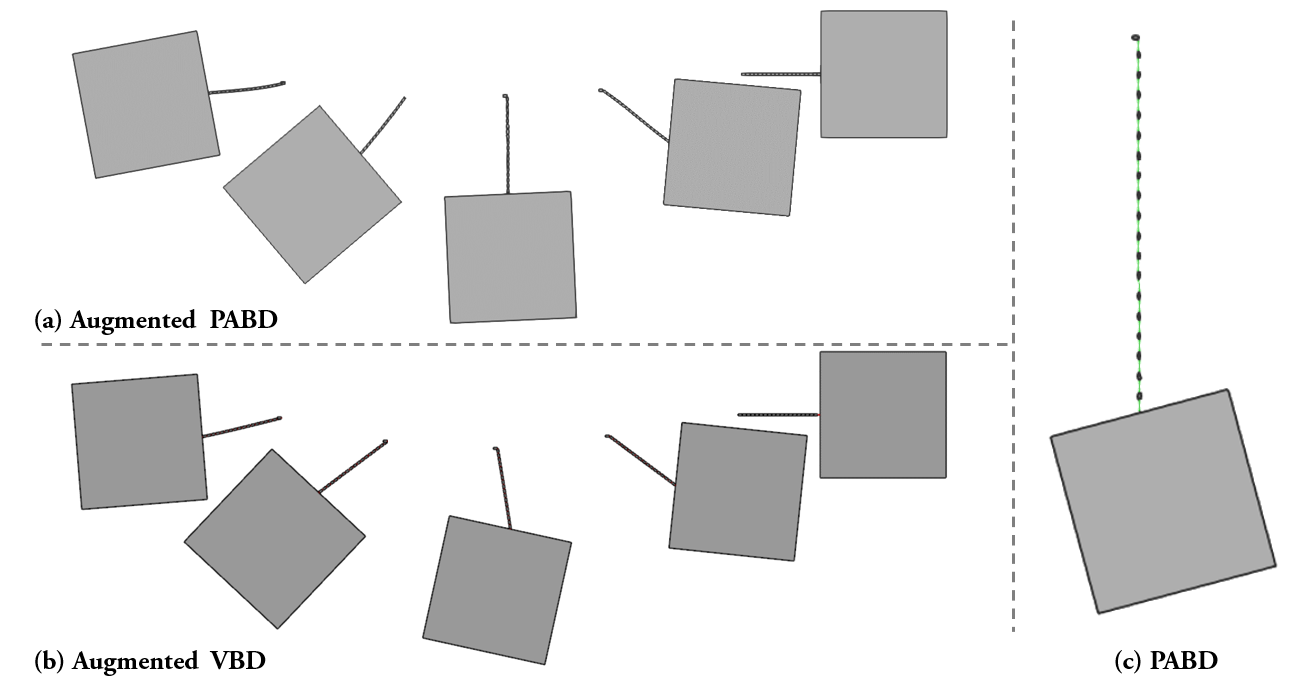}
		\vspace{-0.15in}
		\caption{Comparison between our method and AVBD in simulating a chain of boxes with a large mass ratio. Our method produces results (a) comparable to those of AVBD (\highlight{b}). In contrast, when augmented Lagrangian is not applied, the distance constraints are not well enforced (c). }
		\vspace{-0.12in}
		\label{fig:avbd}
	\end{figure}
	\vspace{-0.20in}
	\subsection{More examples}
	%\paragraph{Complex mechanism} Figure~\ref{fig:linkage} demonstrates a more complicated linkage mechanism by combining different types of joints.
	% Figure~\ref{fig:linkage}(a) illustrates a mechanism composed of four rigid bodies that are connected together by three hinge joints and a slider joint.
	% Figure~\ref{fig:linkage}(b) illustrates a mechanism composed of a driven gear (yellow) and a free gear (grey) that are connected together by a sliding linkage.
	% Figure~\ref{fig:linkage}(c) illustrates a mechanism composed of two gears that are connected by three linkages, both gears are also connected to the fixed structure through hinge joints and the middle linkage is connected to the fixed structure with a slider joint.
	
	%\paragraph{\textbf{Terrain Navigation}} 
	%Figure~\ref{fig:terrain} illustrates a self-propelled car maneuvering across rugged terrain. 
	%The suspension system comprises a suspension body, two front wheels, and two rear wheels. 
	%Each rear wheel is linked via a hinge joint. 
	%Additionally, each front wheel is connected by two hinge joints: one for steering and the other for propelling the car forward. 
	%The application of driving force to the front wheels enables the vehicle to traverse challenging terrain, facilitated by the frictional contact forces between the wheels and the ground.
	
	\subsubsection{\textbf{Bridge}}
	Figure~\ref{fig:bridge} illustrates another jeep using the same suspension to cross a bridge. The bridge is composed of 50 rigid boards, lined together with point connections.
	
	\vspace{-0.06in}
	\subsubsection{\textbf{Coupling with Water}} 
	Figure~\ref{fig:sailboat} depicts a sailboat navigating the ocean, where the ocean is modeled with Gerstner waves~\cite{Tessendorf:2001:Simulating} and the wake behind the sailboat is simulated with the standard shallow water equation~\cite{Garcia:2019:Shallow}.
	The interaction between the boat and the ocean is modeled using a weak coupling strategy, in which the boat's buoyancy is computed directly.
	%The sailboat comprises a rigid body, two deformable sails (174 triangles, 120 vertices), and a rigid horizontal bar for steering one of the sails. 
	%A hinge joint connects the sailboat body to the rigid horizontal bar, controlling the bar's angle. 
	%The sails are linked to the sailboat via spring constraints. 
	%When wind blows, the force applied to the sails is transmitted to the sailboat through constraints, propelling it forward. 
	%Additionally, by manipulating the hinge joint's rotation angle, we can control the sailboat's steering. 
	This example illustrates the robustness of rigid-deformable coupling, demonstrating efficient real-time simulation capabilities.
	
	\vspace{-0.06in}
	\subsubsection{\textbf{Coupling with Sand}} Figure~\ref{fig:dog} shows a real-time simulation of a quadrupedal robotic dog walking on sandy terrain.
	%The quadrupedal robotic dog comprises nine rigid bodies and eight hinge joints. 
	The sandy terrain is simulated using shallow sand equations derived from a height field with a resolution of $256\times 256$~\cite{Zhu:2021:SSE}. 
	%A total of 3.15 million particles are employed to track the surface of the sandy terrain. 
	A straightforward periodic motion is applied to the rotation angles of the hinge joints, enabling the dog to move forward under the influence of external forces.
	The interaction between the dog and sand is similarly modeled using a weak coupling strategy as above.
	Figure~\ref{fig:jeeps} further demonstrates a larger scale of simulation, where a total of 28 jeeps as well as the sand are simulated in real time.
	
	\vspace{-0.30in}
	\highlight{
		\subsubsection{\textbf{Large-scale stacking and collision}} Figure~\ref{fig:largescale} presents balls impacting large-scale 
	$50 \times 50 \times 50$ block piles. 
    Due to the lack of complete performance optimization, the average simulation frame rate is approximately 5 fps.
	}
		
	\begin{table}
		\caption{Time statistics. $N_{body}$ is the number of affine bodies, $N_{joint}$ is the number of joints, $N_{iter}$ is the iteration number, $t_{col}(ms)$ is computational cost for collision detection, $t_{solve}(ms)$ is the computational cost for our projective solver in solving the multibody system only. Computational costs for other materials as well as the coupling are not considered in this table.}
		\label{tab:freq}
		\begin{tabular}{l|c|c|c|c|c}
			\toprule
			Example & $N_{body}$ & $N_{joint}$ & $N_{iter}$ & $t_{col}(ms)$ & $t_{solve}(ms)$ \\
			\midrule
			Figure~\ref{fig:dog} & 9 & 8 & 20 & 2.22 & 2.40\\
			Figure~\ref{fig:linkage}(a) & 4 & 4 & 20 & 1.97 & 2.37\\
			Figure~\ref{fig:linkage}(b) & 8 & 7 & 20 & 1.26 & 2.17\\
			Figure~\ref{fig:windmill} & 515 & 5 & 40 & 1.87 & 4.89\\
			Figure~\ref{fig:ballhit}(b) & 1001 & 0 & 20 & 3.15 & 2.65\\
			Figure~\ref{fig:largemass}(b) & 6 & 0 & 30 & 1.39 & 2.25\\
			Figure~\ref{fig:avbd}(a) & 19 & 18 & 10 & 0 & 1.28\\
			Figure~\ref{fig:avbd}(c) & 19 & 18 & 10 & 0 & 1.25\\
			Figure~\ref{fig:bridge} & 59 & 55 & 20 & 1.73 & 2.05\\
			Figure~\ref{fig:sailboat} & 136 & 10 & 30 & 3.23 & 2.76\\
			Figure~\ref{fig:jeeps} & 140 & 112 & 30 & 0.92 & 2.54\\
			Figure~\ref{fig:largescale}(a) & \highlight{125001} & \highlight{0} & \highlight{20} & \highlight{72.56} & \highlight{87.55}\\
			Figure~\ref{fig:largescale}(b) & \highlight{125001} & \highlight{0} & \highlight{20} & \highlight{74.93} & \highlight{90.24}\\
			\bottomrule
		\end{tabular}
	\end{table}
	
	\begin{figure*}[h]
		\centering
		\includegraphics[width=1.0\linewidth]{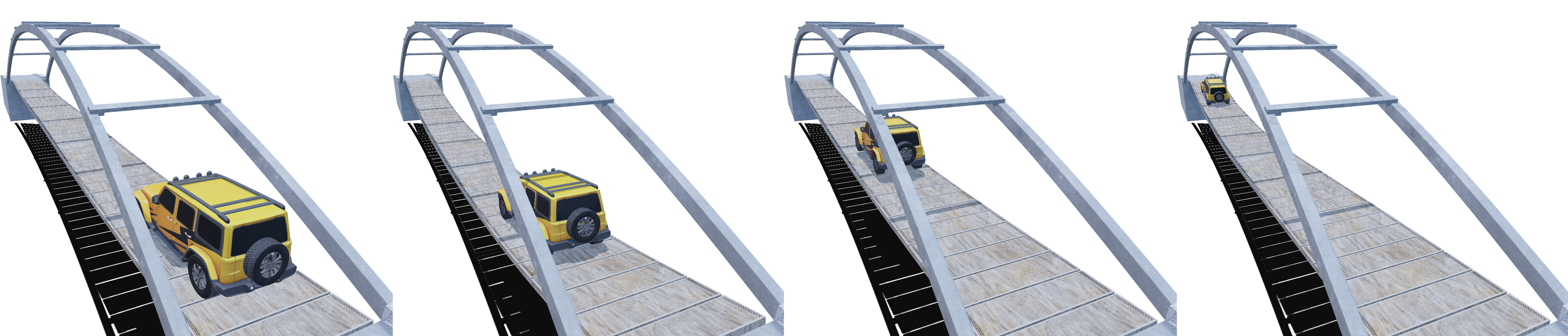}
		\vspace{-0.12in}
		\caption{Bridge. A simulation of a jeep crossing a bridge is conducted, where the bridge comprises 50 rigid boards connected by spherical joints. The example runs at an average frame rate of 55 FPS.}
		\vspace{-0.12in}
		\label{fig:bridge}
	\end{figure*}
	
	\begin{figure*}[h]
		\centering
		\includegraphics[width=1.0\linewidth]{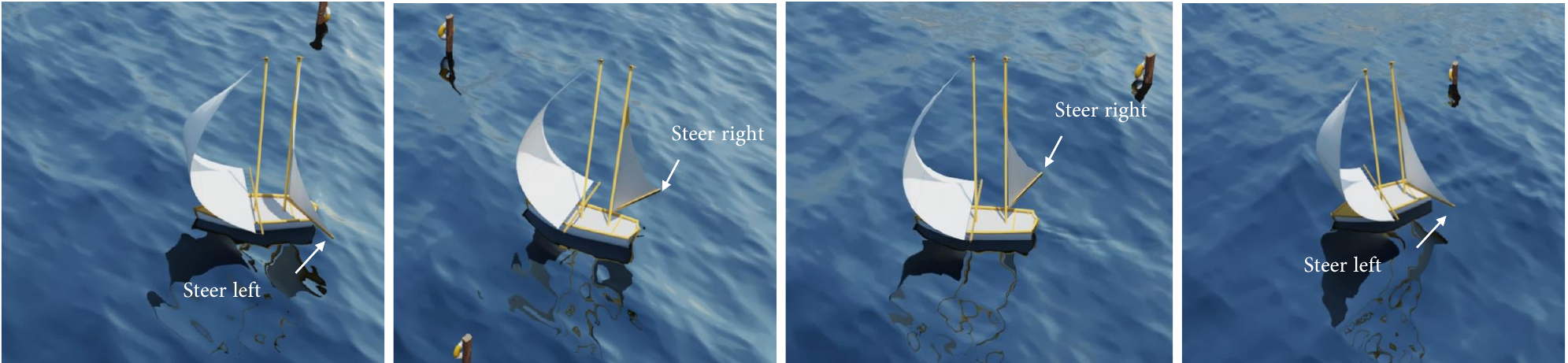}
		\vspace{-0.12in}
		\caption{Sailboat. We simulate a sailboat comprised of a rigid body, two deformable sails, and a rigid horizontal bar connected to the sailboat body with a hinge joint to steer the sailboat. The whole simulate achieves an average rate of 42 FPS.}
		\vspace{-0.24in}
		\label{fig:sailboat}
	\end{figure*}
	\vspace{-0.06in}
	
	\begin{figure*}[h]
		\includegraphics[width=\textwidth]{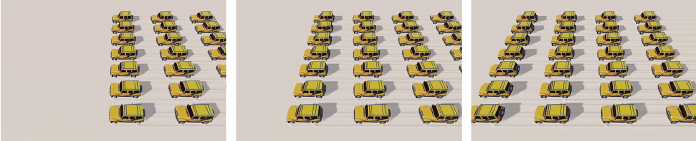}
		\vspace{-0.15in}
		\caption{Wading of jeeps on sand. This scenario is screen-recorded, and the frame rate, accounting for both simulation and rendering, is approximately 40 FPS.}
		\vspace{-0.24in}
		\label{fig:jeeps}
	\end{figure*}

	\begin{figure*}[h]
		\includegraphics[width=\textwidth]{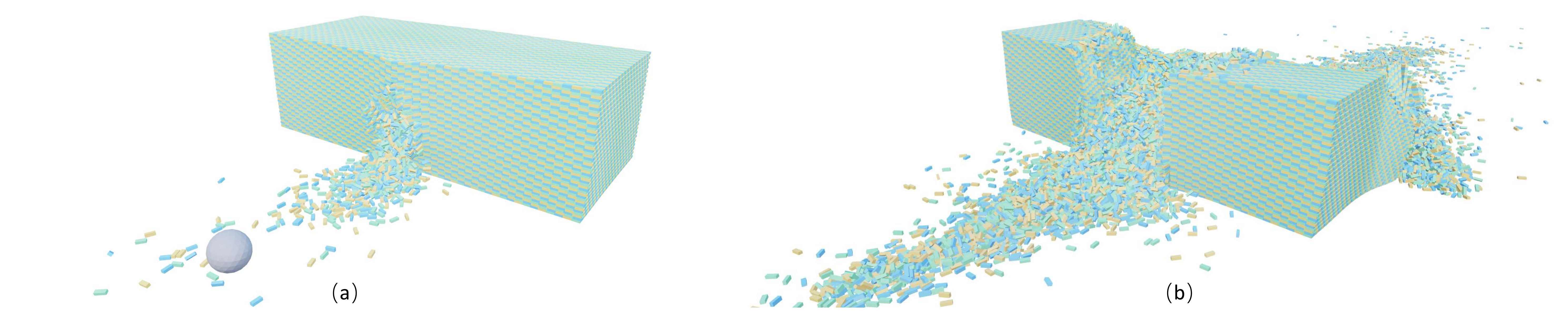}
		\vspace{-0.18in}
		\caption{\highlight{Impact of balls with different masses and velocities on a large-scale $50 \times 50 \times 50$ block pile: (a) smaller mass and lower velocity; (b) larger mass and higher velocity.}}
		\vspace{-0.24in}
		\label{fig:largescale}
	\end{figure*}
	
	\section{Conclusion}
	We have developed projective affine body dynamics, a stable and highly parallel GPU algorithm within affine body dynamics, for solving constrained multibody dynamics with nonlinear constraints.
	Our innovation centers on the reformulation of constrained multibody dynamics into a variational framework, wherein the system is regarded as a set of affine bodies connected by peridynamic bonds. 
	This formulation facilitates the integration of the semi-implicit successive substitution method to address nonlinear terms in the global step and enables a Hessian-free real-time solver for multibody dynamics involving nonlinear joint constraints, contact, and friction.
	
	Our method has several limitations.
	First, it shares the same limitation as the original ABD: simulations can fail under extreme conditions, such as when the rotation between two successive iterations is large.
	Second, in our approach to handling nonlinear equality constraints with polynomial barrier functions, the decomposition of energy into positive and negative components becomes increasingly complex at higher polynomial orders. 
	Finally, how to extend our method to more complex scenarios involving fluids and deformable objects remains unclear, we aim to address these challenges in future work.
	
	\vspace{-0.24in}
    \section*{Acknowledgements}
    We thank the anonymous reviewers for constructive comments and Xukun Luo for demo preparation.    
    This work was supported by the New Generation Artificial Intelligence-National Science and Technology Major Project of China (No. 2025ZD0123902), the National Natural
Science Foundation of China (No.92570206, No.62302490), and the Basic Research Project of ISCAS (No. ISCAS-JCMS-202403).

	\appendix
	%\section{Proof of Theorem}
	
	%-------------------------------------------------------------------------
	% bibtex
	\bibliographystyle{eg-alpha-doi} 
	\bibliography{egbibsample}       
	
	% biblatex with biber
	% \printbibliography                
	
	%-------------------------------------------------------------------------
	
	\clearpage
	
	%\begin{figure*}[h]
	%  \centering
	%  \includegraphics[width=0.505\linewidth]{images/constraint/constraint.pdf}
	%  \includegraphics[width=0.49\linewidth]{images/contact/contact.pdf}
	%  \caption{Left: an increase in the order of the polynomial function results in improved joint constraints; Right: a decrease in the value of $\hat{d}$ leads to a reduction in interpolation magnitude.}
	%  \label{fig:constraintandcontact}
	%\end{figure*}

\end{document}